\let\LaTeXcline\cline
\let\cline\LaTeXcline
\theoremstyle{Thmstyleone}%
\newtheorem{theorem}{Theorem}
\theoremstyle{Thmstyletwo}%
\theoremstyle{Thmstylethree}%
\definecolor{customRed}{HTML}{cf244b}
\definecolor{customBlue}{HTML}{493df5}
\newtheorem{lemma}{Lemma}
\newcommand{\ddim}{d}
\newcommand{\homodim}{k}
\newcommand{\pc}{X}
\newcommand{\jung}[1]{j_{#1}}
\newcommand{\mst}{\ensuremath{\mathsf{MST}}}
\newcommand{\rng}{\ensuremath{\mathsf{RNG}}}
\newcommand{\ug} {\ensuremath{\mathsf{UG}}}
\newcommand{\del}{\ensuremath{\mathsf{DEL}}}
\newcommand{\cech}{\v{C}ech}
\newcommand{\cechn}{\mathcal{\check{C}}}
\newcommand{\delcechn}{\mathcal{D\check{C}}}
\newcommand{\ripsn}{\mathcal{R}}
\newcommand{\delripsn}{\mathcal{DR}}
\newcommand{\offsetf}[1]{\mathcal{O}_{#1}}
\newcommand{\cechf}[1]{\mathcal{\check{C}}_{#1}}
\newcommand{\delcechf}[1]{\mathcal{D\check{C}}_{#1}}
\newcommand{\alphaf}[1]{\mathcal{D}_{#1}}
\newcommand{\ripsf}[1]{\mathcal{R}_{#1}}
\newcommand{\delripsf}[1]{\mathcal{DR}_{#1}}
\DeclareMathOperator{\vor}{Vor}
\DeclareMathOperator{\meb}{meb}
\newcommand{\bbr}{\mathbb{R}}
\newcommand{\bbs}{\mathbb{S}}
\newcommand{\bbv}{\mathbb{V}}
\newcommand{\bbw}{\mathbb{W}}
\newcommand{\calf}{\mathcal{F}}
\newcommand{\calk}{\mathcal{K}}
\newcommand{\calo}{\mathcal{O}}
\newcommand{\calw}{\mathcal{W}}
\DeclareMathOperator{\nrv}{nrv}
\DeclareMathOperator{\bottle}{\mathrm{d_b}}
\DeclareMathOperator{\dgh}{\mathrm{d_{GH}}}
\newcommand{\ratio}{\eta}
\newcommand{\dgm}[1]{\mathrm{dgm}^{#1}}
\newcommand{\dgmcech}[1]{\mathrm{dgm}_{\mathcal{\check{C}}}^{#1}}
\newcommand{\dgmrips}[1]{\mathrm{dgm}_{\mathcal{R}}^{#1}}
\newcommand{\dgmdr}[1]{\mathrm{dgm}_{\mathcal{DR}}^{#1}}
\newcommand{\augmented}[1]{\overline{#1}}
\newcommand{\hgf}[3]{H^{#1}({#2}_{#3})} 
\newcommand{\hgfx}[4]{H^{#1}({#2}_{#3}({#4}))} 
\newcommand{\hgdr}[4]{K_{#3}(#4)}
\newcommand{\maprtodr}[2]{\Phi_{#1}^{#2}}
\newcommand{\mapxtoy}[2]{\Phi_{#1}}
\newcommand{\mapytox}[2]{\Psi_{#1}}
\newcommand{\mapgeoxtoy}[2]{\alpha_{#1}}
\newcommand{\mapgeoytox}[2]{\beta_{#1}}
\newcommand{\betti}[1]{\beta_{#1}}
\newcommand{\skeleton}[2]{{#1}^{({#2})}}
\newcommand{\PH}[1]{\ensuremath{\text{PH}^{#1}}}
\newcommand{\usc}[1]{\mathsf{US}^{#1}}
\newcommand{\rnsc}[1]{\mathsf{RNS}^{#1}}
\newcommand{\msa}[1]{\mathsf{MSA}^{#1}}
\newcommand{\nmj}[1]{\mathsf{NM}^{#1}}
\newcommand{\poly}[1]{\mathsf{cells}^{#1}}
\newcommand{\Star}[1]{\mathrm{cof}}
\newcommand{\dist}[2]{\mathrm{d}({#1},{#2})}
\newcommand{\lex}{\prec_\mathrm{lex}}
\newcommand{\order}[1]{\prec_{#1}}
\begin{document}

\title[Article Title]{Delaunay--Rips filtration: a study and an algorithm}


\author*[1]{\fnm{Mattéo} \sur{Clémot}}\email{matteo.clemot@univ-lyon1.fr}

\author[1,2]{\fnm{Julie} \sur{Digne}}\email{julie.digne@cnrs.fr}

\author[2,3]{\fnm{Julien} \sur{Tierny}}\email{julien.tierny@sorbonne-universite.fr}


\affil[1]{\orgname{LIRIS, Université Claude Bernard Lyon 1}}

\affil[2]{\orgname{CNRS}}

\affil[3]{\orgname{LIP6, Sorbonne University}}


\abstract{
	The Delaunay--Rips filtration is a lighter and faster alternative to the 
	well-known Rips filtration for low-dimensional Euclidean point clouds. Despite 
	these advantages, it has seldom been
	studied. In this paper, we aim to bridge this gap by providing 
	a thorough theoretical and empirical analysis of this construction.
	From a theoretical perspective, we show how the persistence diagrams associated with the Delaunay--Rips filtration approximate those obtained with the Rips filtration.
	Additionally, we describe the instabilities of the Delaunay--Rips persistence diagrams when the input point cloud is perturbed.
	Finally, we introduce an algorithm that computes persistence diagrams of Delaunay--Rips filtrations in any dimension.
	We show that our method is faster and has a lower memory footprint 
	than traditional approaches in low dimensions.
	Our C++ implementation, which comes with Python bindings, is available at \url{https://github.com/MClemot/GeoPH}.
}

\keywords{Persistent homology, algorithmic geometry, topological data analysis.}



\maketitle

\section{Introduction}
\label{sec:introduction}

Topological data analysis (TDA) provides various tools to handle topological properties of datasets. In particular, it allows to perform \textit{topological inference} of point clouds, i.e., finding whether they feature interesting topological structures like connected components, cycles, or cavities, and have a measure of their \emph{significance}. To do so, TDA relies on \emph{filtrations}, which are increasing sequences of simplicial complexes over a set of points. One of them, the (Vietoris)--Rips filtration, is a particularly versatile tool, as it can be applied in any metric space, and in particular in any Euclidean space $\bbr^\ddim$.
Unfortunately, the Rips filtration becomes prohibitively large as the number $n$ of points increases, since the total number of simplices to consider is $2^n-1$. Even its $\homodim$-skeleton, i.e., the set of simplices with dimension at most $\homodim$, has $\calo(n^{\homodim+1})$ simplices. Therefore, despite the availability of optimized implementations~\cite{bauer_ripser_2021} -- including parallelization on the CPU~\cite{perez2021giotto} or the GPU~\cite{zhang2020gpu} -- using the Rips filtration can be very computationally expensive, both in time and memory.

In low dimensional Euclidean spaces, one can leverage the Delaunay complex in order to build filtrations with a drastically reduced number of simplices. This is the case of the well-known $\alpha$-filtration (i.e., the filtrations of $\alpha$-complexes, originally introduced in \cite{edelsbrunner1994three}), also known as the Delaunay filtration~\cite{bauer2017morse}. It has the nice property to be topologically equivalent to the less combinatorial offset filtration, i.e., the filtration of the union of the increasing balls centered in the input points. On the contrary, the Rips filtration only provides an approximation of this offset filtration.

A less common way of constructing a filtration is to restrict the Rips filtration to the Delaunay complex, which we call here the \emph{Delaunay--Rips filtration}, but which is also known as the \emph{weak $\alpha$-filtration}. It has been recently used for constructing a differentiable topological layer, for instance to incorporate topological priors in learning processes~\cite{gabrielsson2020topology}. It was also compared to $\alpha$-filtrations and Rips filtrations in classification tasks where it performed similarly in terms of accuracy~\cite{mishra2023stability}.
In addition, while the Delaunay--Rips filtration, like the Rips filtration and contrary to the $\alpha$-filtration, does not benefit from the topological equivalence to the offset filtration, it is faster to compute than the $\alpha$-filtration since it relies on simplex diameters instead of (empty) minimum enclosing ball radii, which are more intensive to compute. Furthermore, its flag complex structure can be leveraged when computing its persistent homology.%

To our knowledge, the Delaunay--Rips filtration and its persistent homology have seldom been studied and very few practical implementations are available (giotto-tda~\cite{JMLR:v22:20-325} proposes an implementation based on Ripser~\cite{bauer_ripser_2021}).
In this paper, we aim to study the Delaunay--Rips filtration and introduce a dedicated algorithm for computing its persistent homology. More precisely, we make the following contributions:
\begin{enumerate}
    \item a theoretical and empirical study of the instabilities of Delaunay--Rips persistence diagrams and their distance to Rips persistence diagrams (\autoref{sec:analysis});
    \item a dedicated algorithm to compute Delaunay--Rips persistence diagrams in any dimension (\autoref{sec:method});
    \item a C++ implementation (with Python bindings) of this algorithm\footnote{available at this repository: \url{https://github.com/MClemot/GeoPH}}.
\end{enumerate}

\section{Related work}

Much work has been done to improve the use of the Rips filtration. Algorithmically, it is common to compute cohomology instead of homology, which produces the same persistence diagrams. Ripser~\cite{bauer_ripser_2021} specializes to Rips filtrations by representing implicitly the coboundary matrix and by dealing efficiently with the so-called \textit{apparent} and \textit{emergent pairs}. Concerning parallelism, the Rips filtration construction and the finding of the apparent pairs have been implemented on the GPU by Ripser++~\cite{zhang2020gpu}, but not the matrix reduction step. A parallel implementation of this step on the CPU was proposed by giotto-ph~\cite{perez2021giotto}, following the lockfree shared-memory approach of~\cite{morozov2020towards}. 
However, even with these parallel approaches, and even with a limitation on the dimension of the simplices and their value of the filtration, Rips and \cech\ filtrations remain large and become prohibitively costly to compute as the number $n$ of points increases.

\paragraph{Reducing the size of filtrations}

Linear-size filtrations depending on an approximating factor $\varepsilon$ have been proposed to provide $(1+\calo(\varepsilon))$-approximations of the Rips or the \cech\ filtrations~\cite{sheehy2012linear, kerber2013approximate, cavanna2015geometric}. Another way to reduce flag filtrations (that are made of flag complexes, like the Rips complexes) is to use \emph{edge collapses}~\cite{boissonnat2020edge, glisse2022swap} that return a smaller filtration with the same persistence homology, thereby speeding up its computation.
Finally, other methods select a reduced set of landmarks to build smaller filtrations that also provide a nice estimation of the topology~\cite{de2004topological, graf2025flood}.

\paragraph{Learning-based approaches}
Another way to avoid the expensive computation of persistent diagrams of the Rips filtration consists in using a neural network that was previously trained to map point sets to some vectorization of these diagrams. For instance, TopologyNet~\cite{zhou2022learning} and RipsNet~\cite{de2022ripsnet} propose architectures (based respectively on the EdgeConv~\cite{wang2019dynamic} and DeepSets~\cite{zaheer2017deep} architectures) that can learn finite-dimensional embeddings of Rips persistence diagrams.

\paragraph{Persistent homology in Euclidean spaces}

Some optimizations are possible when considering point clouds in Euclidean spaces (all the more so in low dimension). In particular, the \textit{reduced} Vietoris--Rips filtration~\cite{koyama2023faster} has the same 1-dimensional persistent homology (\PH{1}) than the Rips filtration but contains only $\calo(n^2)$ instead of $\calo(n^3)$ triangles, relying on the relative neighborhood graph (see \autoref{sec:geometric-structures}). This enables a faster and more memory-efficient computation of the \PH{1} of the Rips filtration of low-dimensional points clouds. In the Euclidean plane, minmax length triangulations can be leveraged to further speed up the search for triangles that destroy a \PH{1} class~\cite{clemot2025topological}.
Besides, working in a low-dimensional Euclidean space allows leveraging Delaunay
complexes to build filtrations (e.g., Delaunay and Delaunay--Rips filtrations)
with a drastically reduced number of simplices, while still providing
a good approximation of the offset filtration.
These Delaunay complexes have also been used in the context of bifiltrations~\cite{corbet2023computing, blaser2024core, alonso2024delaunay}, which are beyond the scope of this work.

\paragraph{Applications}

The Delaunay--Rips filtration was used for shape classification (based on random forests) or biophysical time series classification (based on SVMs)~\cite{mishra2023stability}. The authors claim that, in practice, the instability of the Delaunay--Rips persistence diagrams (see \autoref{sec:stability}) have little impact on the accuracy of these applications (in comparison to using the Rips- or the $\alpha$-filtration).
The Delaunay--Rips filtration was also successfully used to build topological loss functions that enable regularizing or incorporating topological priors in machine learning models~\cite{gabrielsson2020topology}.

\section{Background}
\label{sec:background}

We consider a point cloud $\pc$ with $n$ points in the $\ddim$-dimensional Euclidean space $\bbr^\ddim$.
An \emph{abstract simplicial complex} is a collection $\calk$ of subsets $\sigma\subset\pc$ (called \textit{simplices}) that is closed under taking subsets (called their \textit{faces}), i.e., if $\sigma\in\calk$ and $\tau\subset\sigma$, then $\tau\in\calk$.
We write $\calk^\homodim=\{\sigma\in\calk\mid\dim\sigma=\homodim\}$ the set of $\homodim$-dimensional simplices of $\calk$, and $\skeleton{\calk}{k}=\{\sigma\in\calk\mid\dim\sigma\leq\homodim\}$ its $\homodim$-\emph{skeleton}. We also denote $\Delta(X)=2^\pc\setminus\{\varnothing\}$ the full abstract simplicial complex spanned by $\pc$.

Another simplicial complex of interest is the Delaunay complex $\del(\pc)$. To
be properly defined, it requires that no $\ddim+1$ points lie on the same
hyperplane and that no $\ddim+2$ points lie on the same $(\ddim-1)$-dimensional
sphere, which we refer to as the general position hypothesis.
The non-general positions indeed induces combinatorial changes
(e.g., edge flips) in the Delaunay complex, leading to
instabilities in the Delaunay--Rips persistence (see \autoref{sec:stability}).
In the worst case, the Delaunay complex contains
$\calo(n^{\lceil\ddim/2\rceil})$ simplices~\cite{amenta2007complexity}. In
practice however, one can expect a construction
time complexity in $\calo(n\log n)$, with a
constant exponential in the ambient dimension $\ddim$ for uniform random
vertices, thanks to a pre-sorting~\cite{boissonnat2009incremental,
cgal:hdj-t-24b}.

We use in the following homology groups and persistent homology groups. We refer the reader to reference books \cite{edelsbrunner_computational_2010, zomorodian_computational_2010} for a complete introduction to simplicial and persistent homology.

\subsection{Persistence diagrams and bottleneck distance}


For a given dimension $\homodim\geq0$, the $\homodim$-dimensional \textit{persistence diagram} $\dgm{\homodim}(\calf)$ of a filtration $\calf$ is a multiset of points $(b,d)\in\bbr^2$ where $b$ and $d$ are respectively the \textit{birth} and the \textit{death} of a $\homodim$-dimensional persistence class.
Persistence diagrams can be compared with metrics borrowed from optimal
transport, namely the Wasserstein and bottleneck distances. Given $D$ and $D'$
two persistence diagrams corresponding to the same persistent homology
dimension,
we define their \emph{augmented} diagrams $\augmented{D}=D\cup\Pi(D')$
and $\augmented{D'}=D'\cup\Pi(D)$ where $\Pi$ is the map on $\bbr^2$ that
projects orthogonally on the diagonal, i.e.,
$\Pi:(b,d)\mapsto\left(\frac{b+d}{2}, \frac{b+d}{2}\right)$. The set of
bijections between the two is noted
$\Psi=\mathrm{Bij}\left(\augmented{D},\augmented{D'}\right)$. Then the cost $c_\infty$
between two points $x\in\augmented{D}$ and $y\in\augmented{D'}$ is defined as 0
if both points belong to the diagonal and $\lVert x-y\rVert_\infty$
otherwise. This permits to define the 
bottleneck distance
\begin{equation}
    \bottle(D,D')=
    \min\limits_{\psi\in\Psi}\max\limits_{x\in\augmented{D}}c_\infty\bigl(x,
    \psi(x)\bigr).
\end{equation}

\subsection{Persistence modules and interleavings}
\label{sec:interleaving}

Let $\calf=(\calf_r)_{r\geq0}$ be a filtration. For all $r\leq r'$, the
inclusion map $\iota_{r,r'}:\calf_r\hookrightarrow\calf_{r'}$ induces, when
applying the homology functor, a homomorphism
$(\iota_{r,r'})_\star:\hgf{\homodim}{\calf}{r}
\hookrightarrow\hgf{\homodim}{\calf}{r'}$. This allows to define the
$\homodim$-th \emph{persistence module} of $\calf$, which is defined
as the collection $\left(\hgf{\homodim}{\calf}{r}\right)_{r\geq0}$ with these
induced inclusions maps $((\iota_{r,r'})_\star)_{0\leq r\leq r'}$.
Two persistence modules $\bbv=(V_r)_{r\geq0}$ and $\bbw=(W_r)_{r\geq0}$ are said to be strongly $\varepsilon$-\emph{interleaved} if there exist homomorphisms $\Phi_r:V_r\rightarrow W_{r+\varepsilon}$ and $\Psi_r:W_r\rightarrow V_{r+\varepsilon}$ for all $r\geq0$, such that the following diagrams commute for all $r\leq r'$ (where horizontal maps are the induced inclusions)~\cite{chazal2016structure}:
\begin{equation}
	\label{eq:interleaving}
	\begin{tabular}{cc}
		\begin{tikzcd}[cramped, row sep=scriptsize, column sep=scriptsize]
			V_{r-\varepsilon} \arrow[rr,hook] \arrow[dr,"\Phi_{r-\varepsilon}"'] & & V_{r+\varepsilon} \\
			& W_{r} \arrow[ur,"\Psi_{r}"']
		\end{tikzcd}
		& 
		\begin{tikzcd}[cramped, row sep=scriptsize, column sep=scriptsize]
			& V_{r+\varepsilon} \arrow[rr,hook] & & V_{r'+\varepsilon} \\
			W_{r} \arrow[rr,hook] \arrow[ur,"\Psi_{r}"] & & W_{r'} \arrow[ur,"\Psi_{r'}"'] 
		\end{tikzcd}
		\\\\
		\begin{tikzcd}[cramped, row sep=scriptsize, column sep=scriptsize]
			& V_{r} \arrow[dr,"\Phi_{r}"] \\
			W_{r-\varepsilon} \arrow[rr,hook] \arrow[ur,"\Psi_{r-\varepsilon}"] & & W_{r+\varepsilon}
		\end{tikzcd}
		&
		\begin{tikzcd}[cramped, row sep=scriptsize, column sep=scriptsize]
			V_{r} \arrow[rr,hook] \arrow[dr,"\Phi_{r}"'] & & V_{r'} \arrow[dr,"\Phi_{r'}"] \\
			& W_{r+\varepsilon} \arrow[rr,hook] & & W_{r'+\varepsilon}
		\end{tikzcd}
	\end{tabular}
\end{equation}
The strong stability theorem~\cite{chazal2009proximity} states that if two persistence modules $\bbv$ and $\bbw$ are strongly $\varepsilon$-interleaved, then:
\[\bottle\bigl(\dgm{}(\bbv), \dgm{}(\bbw)\bigr)\leq\varepsilon.\]
The multiplicative version of interleaving (see e.g., \cite{sheehy2012linear}) is also of interest. In this case, we instead have homomorphisms $\Phi_r:V_r\rightarrow W_{r\ratio}$ and $\Psi_r:W_r\rightarrow V_{r\ratio}$ and the corresponding commutative diagrams.
By reparameterizing the filtrations on a log-scale, the strong stability theorem becomes in its multiplicative version that if two persistence modules are multiplicatively strongly $\ratio$-interleaved, then:
\[\bottle\bigl(\log\dgm{}(\bbv), \log\dgm{}(\bbw)\bigr)\leq\log(\ratio)\text{
with }\log D=\{(\log b,\log d)\mid(b,d)\in D\}.\]

\subsection{Filtrations for point clouds}
\label{sec:geometric-filtrations}

\begin{figure}
	\includegraphics[width=.137\linewidth]{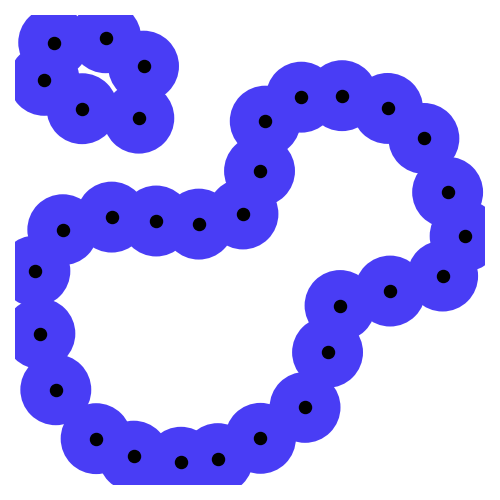}
	\includegraphics[width=.137\linewidth]{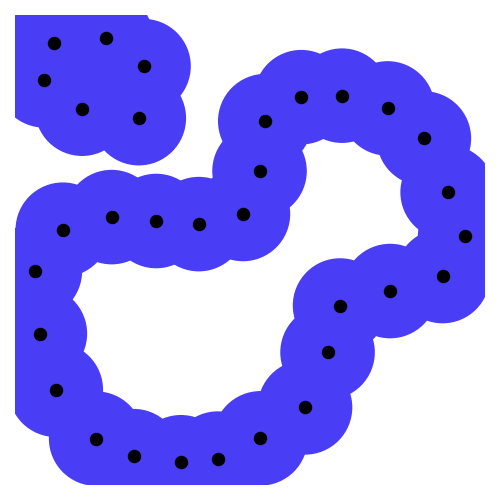}
	\includegraphics[width=.137\linewidth]{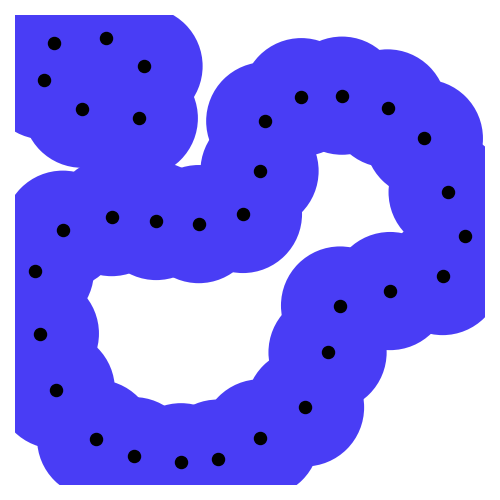}
	\includegraphics[width=.137\linewidth]{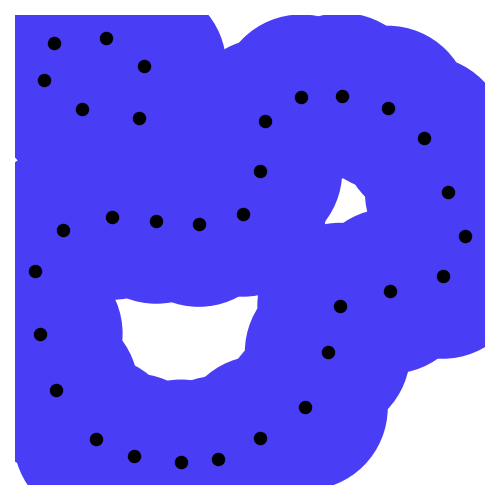}
	\includegraphics[width=.137\linewidth]{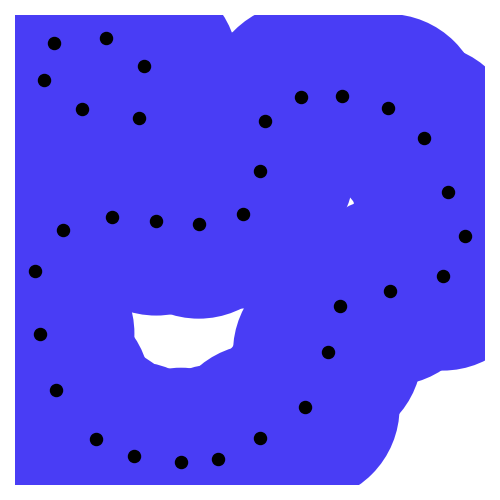}
	\includegraphics[width=.137\linewidth]{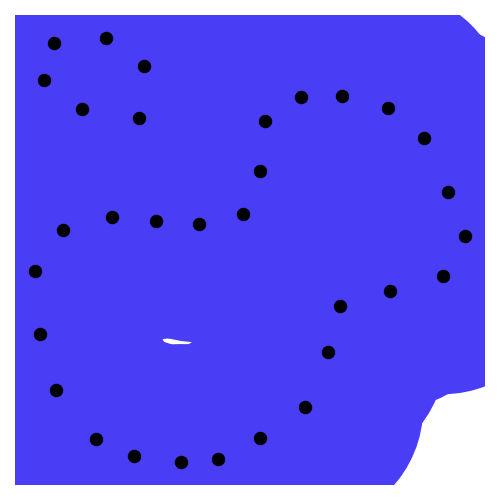}
	\includegraphics[width=.137\linewidth]{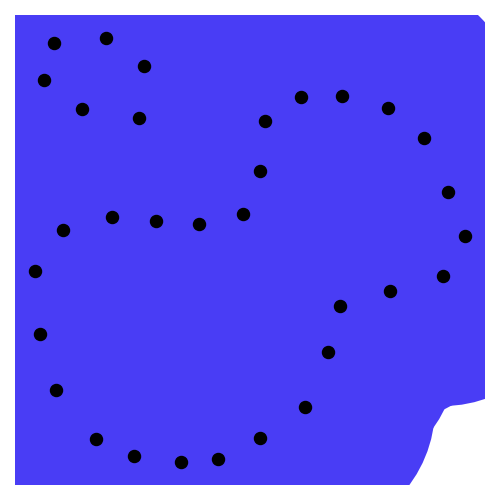} \\
	\includegraphics[width=.137\linewidth]{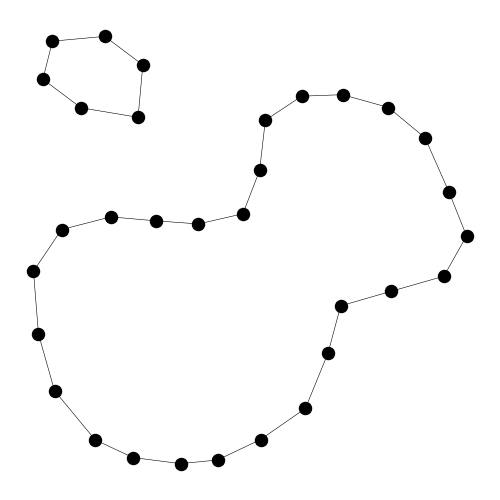}
	\includegraphics[width=.137\linewidth]{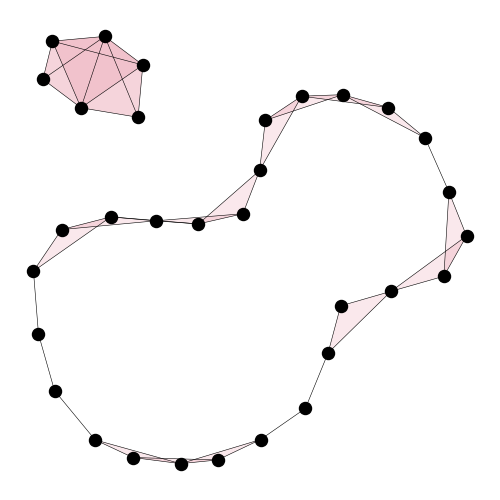}
	\includegraphics[width=.137\linewidth]{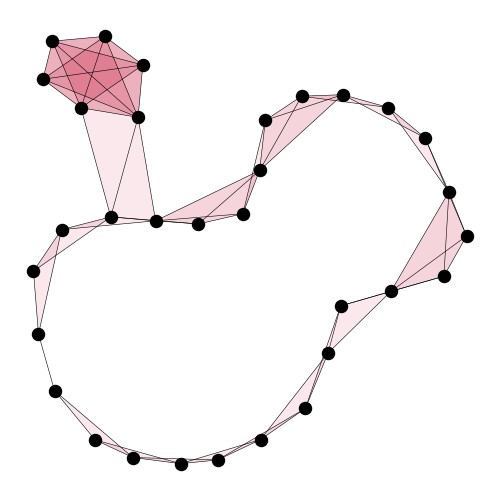}
	\includegraphics[width=.137\linewidth]{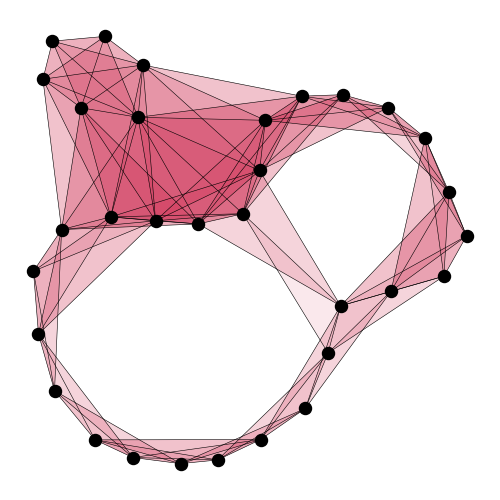}
	\includegraphics[width=.137\linewidth]{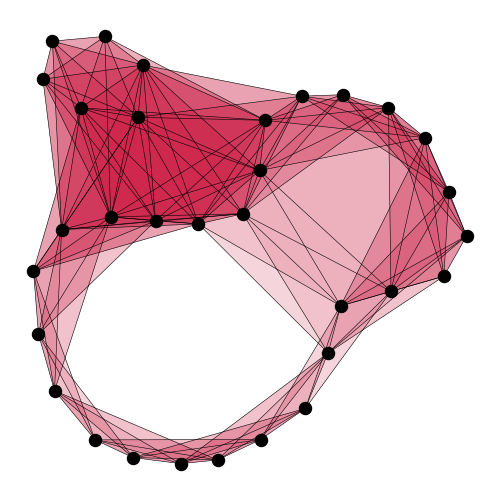}
	\includegraphics[width=.137\linewidth]{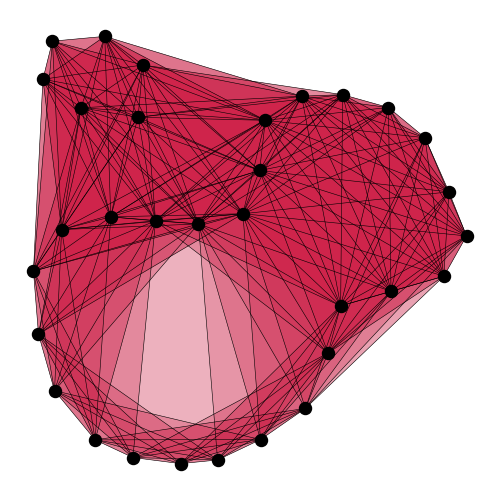}
	\includegraphics[width=.137\linewidth]{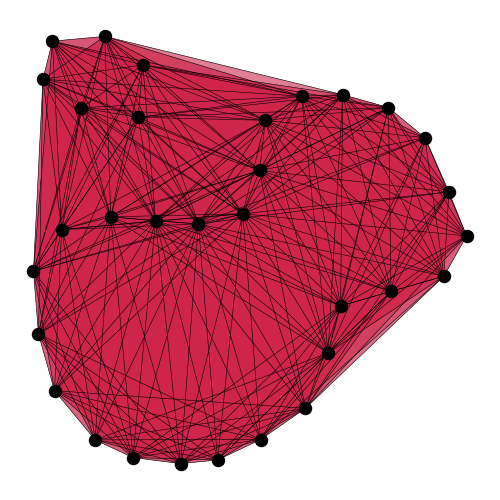} \\
	\includegraphics[width=.137\linewidth]{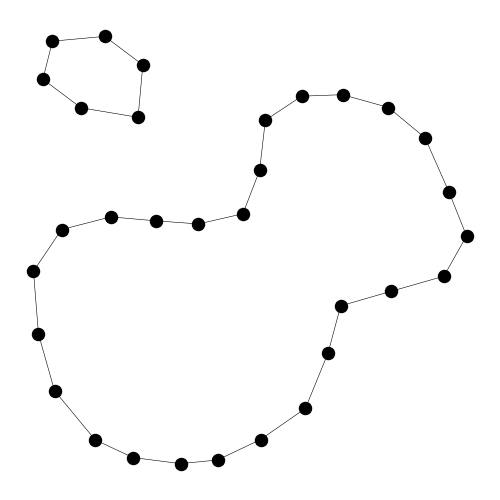}
	\includegraphics[width=.137\linewidth]{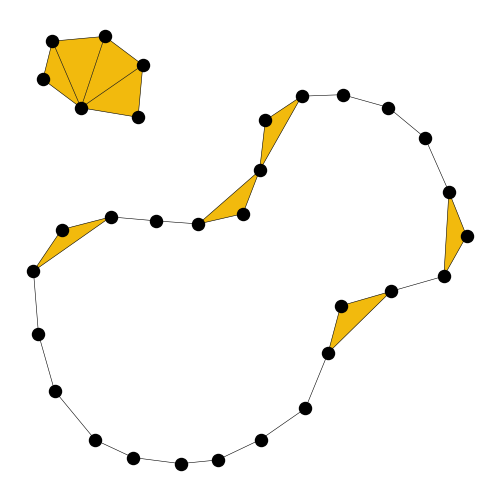}
	\includegraphics[width=.137\linewidth]{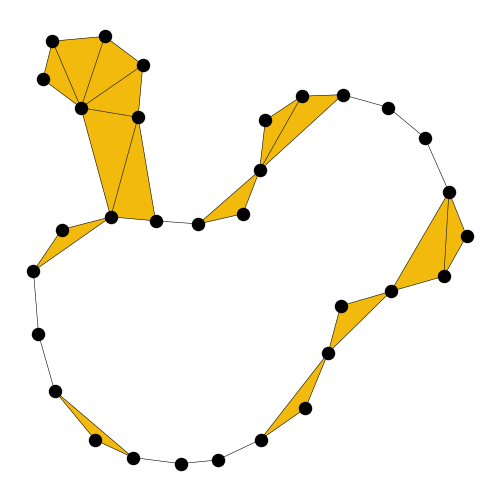}
	\includegraphics[width=.137\linewidth]{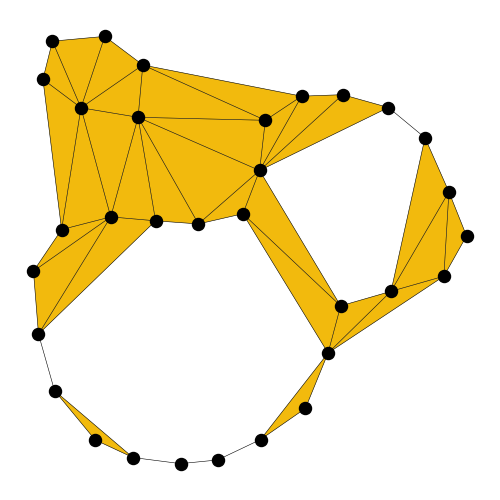}
	\includegraphics[width=.137\linewidth]{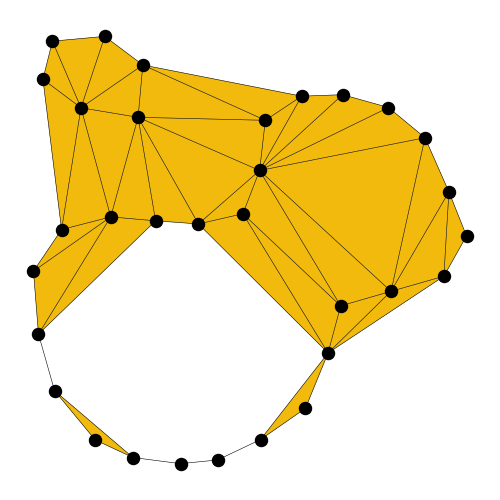}
	\includegraphics[width=.137\linewidth]{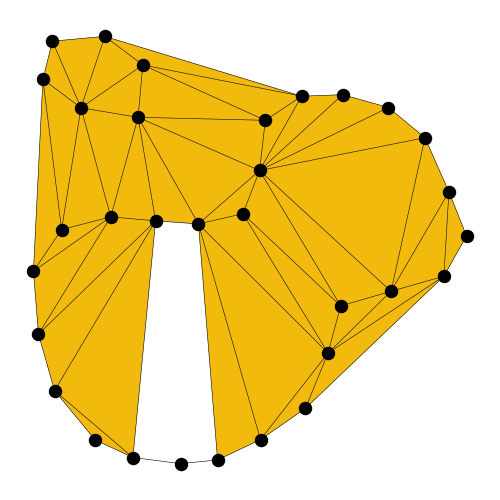}
	\includegraphics[width=.137\linewidth]{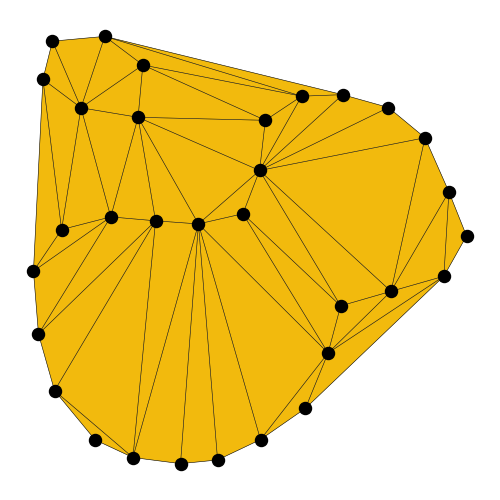}
	\centering
	\caption{Offset filtration (top), Rips filtration (middle) and Delaunay--Rips filtration (bottom) for the same point cloud in $\bbr^2$. Notice that in the second to last column, the Rips and Delaunay--Rips simplicial complexes -- for the same diameter value -- do not feature the same number of topological handles.}
	\label{fig:filtrations}
\end{figure}

The \emph{offset filtration} $(\offsetf{r}(\pc))_{r\geq0}$, where
$\offsetf{r}(\pc) = \bigcup_{x\in\pc}B_r(x)$ is the union of the closed balls of
radius $r$ centered in the vertices in $\pc$, provides a multi-scale
representation of $\pc$ which is of theoretical interest in topological data
analysis (see \autoref{fig:filtrations}, top). However, the offset filtration is hard to use
directly because of its non-combinatorial nature. One may rather use the
\emph{nerves} of this set of closed balls (hereafter noted the $\nrv$ operator), known as
the \emph{\cech\ filtration} $\bigl(\cechf{r}(\pc)\bigr)_{r\geq0}$, where
\begin{equation}
	\cechf{r}(\pc) =
\left\{\sigma\subset\pc\mid\bigcap_{x\in\sigma}B_r(x)\neq\varnothing\right\} =
\nrv\Bigl(\bigl(B_r(x)\bigr)_{x\in\pc}\Bigr).
	\label{eq:cech}
\end{equation}
By the nerve theorem~\cite{bauer2023unified}, the two filtrations are topologically equivalent, i.e., for each $r\geq0$, $\offsetf{r}(\pc)$ and $\cechf{r}(\pc)$ are homotopy equivalent and therefore have isomorphic homology groups. The scale at which a simplex $\sigma\subset\pc$ is introduced in the \cech\ filtration is the radius of the minimum enclosing ball of $\sigma$, noted $\meb(\sigma)$, so that we can also write $\cechf{r}(\pc) = \left\{\sigma\subset\pc\mid\meb(\sigma)\leq r\right\}$. A common approximation consists in replacing this value by the diameter $\delta(\sigma)=\max_{x,y\in\sigma}\lVert x-y\rVert_2$. This leads to the well-studied \emph{Rips filtration} $(\ripsf{r}(\pc))_{r\geq0}$, where%
\footnote{Rips complexes are often formulated as $\ripsf{r}(\pc) = \left\{\sigma\subset\pc\mid\delta(\sigma)\leq r\right\}$ (i.e., diameter bounded by $r$), but we double here the threshold value to match the scale of \cech{} filtrations (i.e., radius bounded by $r$).}:
\begin{equation}
	\ripsf{r}(\pc) = \left\{\sigma\subset\pc\mid\delta(\sigma)\leq2r\right\}.
	\label{eq:rips}
\end{equation}
Jung's theorem~\cite{jung1901ueber} establishes that for any simplex $\sigma$, $\delta(\sigma)\leq2\meb(\sigma)\leq\sqrt{2}\delta(\sigma)$. Therefore, we always have the inclusions $\cechf{r}(\pc)\subseteq\ripsf{r}(\pc)\subseteq\cechf{\sqrt{2}r}(\pc)$ which make the persistence modules of the Rips and \cech\ filtrations multiplicatively, strongly $\sqrt{2}$-interleaved (using as $\Phi_r$ and $\Psi_r$ the homomorphisms induced by the inclusion maps $\cechf{r}(\pc)\hookrightarrow \ripsf{r}(\pc)$ and $\ripsf{r}(\pc)\hookrightarrow \cechf{\sqrt2r}(\pc)$). Hence, in the general case, with the strong stability theorem, $\bottle\left(\log\dgmrips{}(\pc), \log\dgmcech{}(\pc)\right)\leq\log\sqrt{2}$.

\paragraph{Delaunay-restrained filtrations}
Both \cech\ and Rips filtrations consider in total $2^n-1$ simplices, or
$\calo\left(n^{\homodim+1}\right)$ simplices if the dimension of those simplices
is bounded up to $\homodim$ (e.g., if we are only interested in the first
$(\homodim-1)$-dimensional persistent homologies). For low-dimensional data, it
is possible to leverage the Delaunay complex in order to reduce the number of
considered simplices and therefore make the computations far more scalable, for
a fixed dimension. In particular, one can consider intersection of the \cech\
complex with the Delaunay complex of $\pc$, known as the \emph{Delaunay--\cech\
filtration} $\bigl(\delcechf{r}(\pc)\bigr)_{r\geq0}$:
\begin{equation}
	\delcechf{r}(\pc) =
	\left\{\sigma\in\del(\pc)\mid\bigcap_{x\in\sigma}B_r(x)\neq\varnothing\right\}=\nrv\Bigl(\bigl(B_r(x)\bigr)_{x\in\pc}\Bigr)\cap\del(\pc).
	\label{eq:delcech}
\end{equation}
This one is slightly distinct from the commoner \emph{Delaunay filtration}, also called the \emph{$\alpha$\nobreakdash-filtration}\footnote{defined as the nerve of the Voronoi balls, i.e., $\alphaf{r}(\pc) =\left\{\sigma\subset\pc\mid\bigcap_{x\in\sigma}\vor_r(x,\pc)\neq\varnothing\right\}$ where $\vor_r(x,\pc)=B_r(x)\cap\vor(x,\pc)$. The scale at which a simplex appears in this filtration is the minimum radius of an empty circumsphere of $\sigma$, which can be greater than $\meb(\sigma)$.}.
However, the \cech, Delaunay--\cech\ and Delaunay filtrations are known to be
topologically equivalent in the sense that they have isomorphic persistent
homology~\cite{bauer2017morse}.
More precisely, for any $r$, there exists a sequence of elementary collapses that transforms $\cechf{r}(\pc)$ into $\delcechf{r}(\pc)$.
Eventually, we can also define the
\emph{Delaunay--Rips filtration} $\bigl(\delripsf{r}(\pc)\bigr)_{r\geq0}$
\cite{mishra2023stability}, sometimes referred as the \emph{weak
$\alpha$\nobreakdash-filtration}~\cite{JMLR:v22:20-325}, as the intersection of
the Rips filtration with the Delaunay complex (see \autoref{fig:filtrations},
bottom):
\begin{equation}
    \delripsf{r}(\pc) = \left\{\sigma\in\del(\pc)\mid\delta(\sigma)\leq2r\right\}.
\end{equation}
In addition to featuring less simplices because restrained to $\del(\pc)$, this is
also faster to compute than the Delaunay--\cech\ filtration, as the computation
of each simplex's minimum enclosing ball is not required. However, Rips and
Delaunay--Rips filtrations are not topologically equivalent (see
\autoref{fig:filtrations}).
In addition, the computational cost of the Delaunay triangulation scales exponentially with the ambient dimension $\ddim$ and therefore becomes prohibitively high when $\ddim$ increases. Consequently, the use of this filter is only conceivable for low-dimensional data (up to the order of 10).

In the following, given a filtration $(\calf_r(\pc))_{r\geq0}$, we write $\dgm{\homodim}_{\calf}(\pc)$ the associated $\homodim$-dimensional persistence diagram. We will omit $\pc$ when the considered point cloud is not ambiguous.

\subsection{(In)stability of point cloud persistence}
\label{sec:stability}

Persistence diagrams -- for some filtrations -- are known to be stable to perturbations of the input. In particular, the bottleneck distance between the persistence diagrams of either Rips or \cech\ filtration of $X$ and $Y$ is upper bounded by their Gromov--Hausdorff distance~\cite{chazal2009gromov, chazal2014persistence}:
\begin{align}
\bottle\bigl(\dgmrips{}(X),\dgmrips{}(Y)\bigr)\leq2\dgh(X,Y)\label{eq:ripsstab}\\
\bottle\bigl(\dgmcech{}(X),\dgmcech{}(Y)\bigr)\leq2\dgh(X,Y)\label{eq:cechstab}
\end{align}
Similar stability results exist for comparing the $L_p$-Wasserstein distances $\calw_p$ and the $p$-norm of the perturbations, when $p<\infty$~\cite{skraba2020wasserstein}.

\begin{figure}
	\centering
	\def\svgwidth{.7\linewidth}
	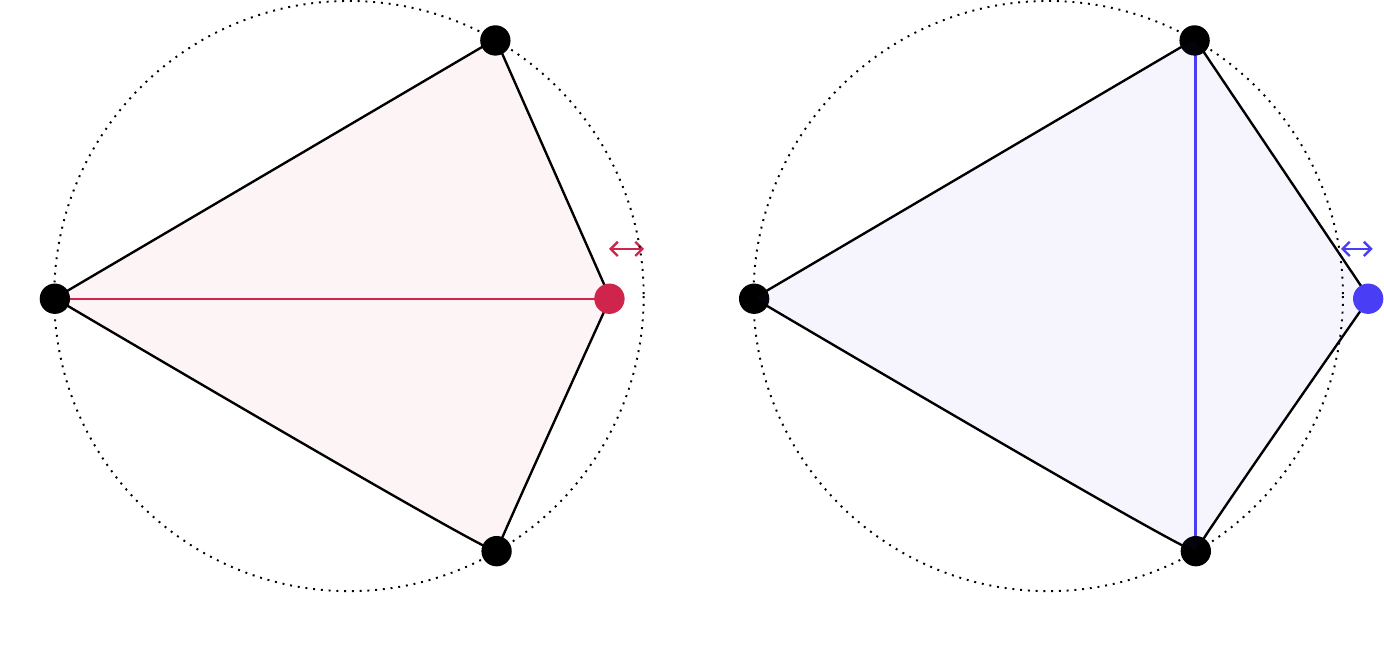
	\vspace{.5cm}
	\caption{Illustration of the instability of persistence diagrams for the
Delaunay--Rips filtration. Left: the red point is slightly inside the black unit
circle and there exists one positive persistent pair, killed by the red edge. Right: the blue point is slightly outside the unit circle. The blue edge is therefore a Delaunay edge, hence the two triangles appear at the same value $\sqrt{3}$ as the black cycle.}
	\label{fig:del-rips-instability}
\end{figure}

On the contrary, persistence diagrams for the Delaunay--Rips filtration may feature instabilities~\cite{mishra2023stability} in the non-general positions for the Delaunay complex.
See \autoref{fig:del-rips-instability} for an example of such instability in the plane. Nonetheless, those instabilities can be characterized (see \autoref{sec:analysis}).

\subsection{Geometric structures}
\label{sec:geometric-structures}

In this section, we introduce some geometric or combinatorial objects and their links with persistent homology.

\paragraph{Minimum spanning acycles}

\begin{figure}
	\centering
	\hspace{-1cm}
	\includegraphics[width=.2\linewidth]{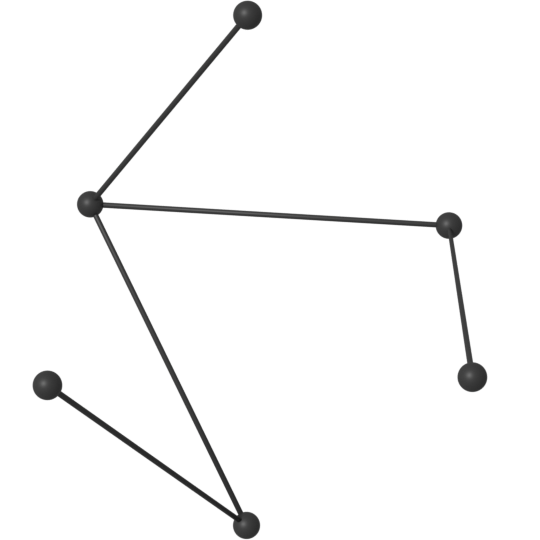}
	\hspace{-.3cm}
	\includegraphics[width=.2\linewidth]{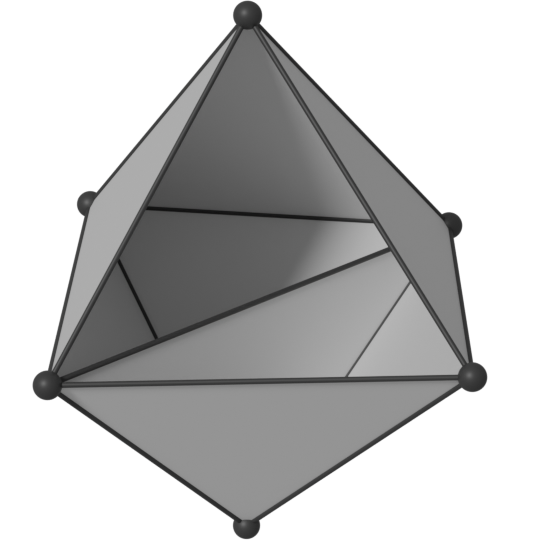}
	\hspace{-.3cm}
	\includegraphics[width=.2\linewidth]{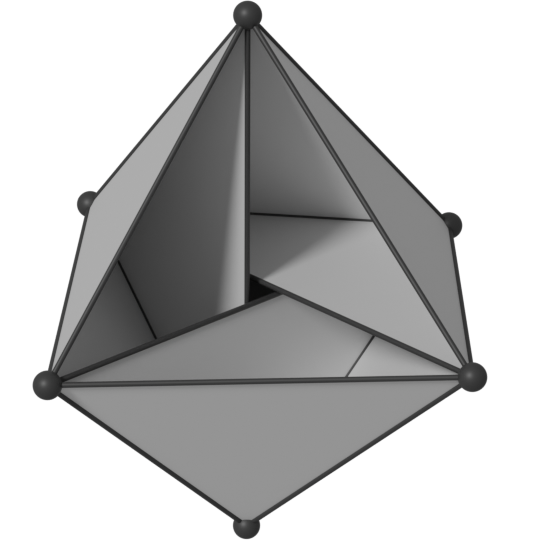}
	\hspace{.5cm}
	\def\svgwidth{.29\linewidth}\footnotesize
\begingroup%
  \makeatletter%
  \providecommand\color[2][]{%
    \errmessage{(Inkscape) Color is used for the text in Inkscape, but the package 'color.sty' is not loaded}%
    \renewcommand\color[2][]{}%
  }%
  \providecommand\transparent[1]{%
    \errmessage{(Inkscape) Transparency is used (non-zero) for the text in Inkscape, but the package 'transparent.sty' is not loaded}%
    \renewcommand\transparent[1]{}%
  }%
  \providecommand\rotatebox[2]{#2}%
  \newcommand*\fsize{\dimexpr\f@size pt\relax}%
  \newcommand*\lineheight[1]{\fontsize{\fsize}{#1\fsize}\selectfont}%
  \ifx\svgwidth\undefined%
    \setlength{\unitlength}{650.75494385bp}%
    \ifx\svgscale\undefined%
      \relax%
    \else%
      \setlength{\unitlength}{\unitlength * \real{\svgscale}}%
    \fi%
  \else%
    \setlength{\unitlength}{\svgwidth}%
  \fi%
  \global\let\svgwidth\undefined%
  \global\let\svgscale\undefined%
  \makeatother%
  \begin{picture}(1,0.68874731)%
    \lineheight{1}%
    \setlength\tabcolsep{0pt}%
    \put(0,0){\includegraphics[width=\unitlength,page=1]{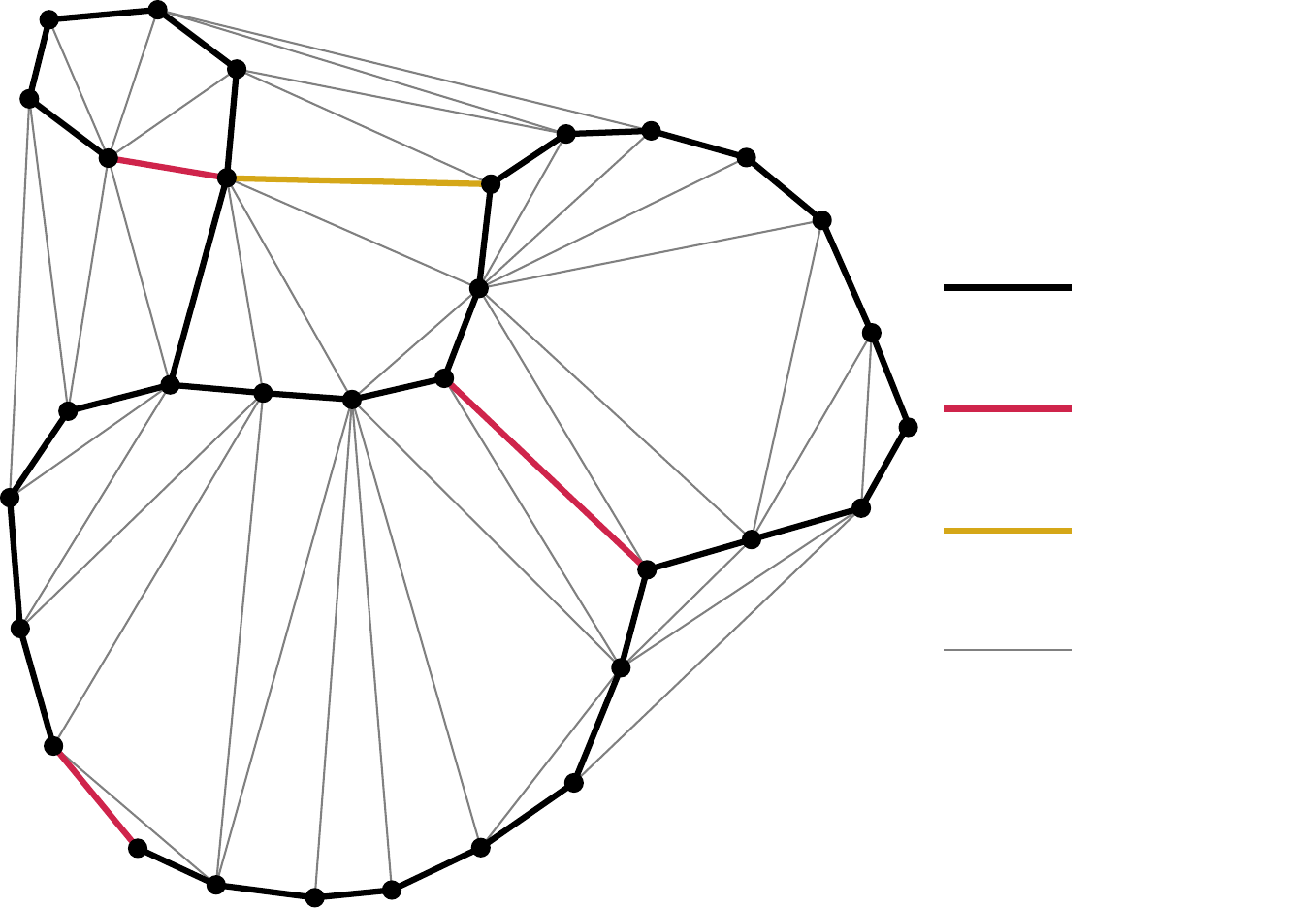}}%
    \put(0.83198638,0.45510829){\color[rgb]{0,0,0}\makebox(0,0)[lt]{\lineheight{1.25}\smash{\begin{tabular}[t]{l}$\mst$\end{tabular}}}}%
    \put(0.83198638,0.36412683){\color[rgb]{0,0,0}\makebox(0,0)[lt]{\lineheight{1.25}\smash{\begin{tabular}[t]{l}$\rng\setminus\mst$\end{tabular}}}}%
    \put(0.83198638,0.18028813){\color[rgb]{0,0,0}\makebox(0,0)[lt]{\lineheight{1.25}\smash{\begin{tabular}[t]{l}$\del\setminus\ug$\end{tabular}}}}%
    \put(0.83198638,0.27220746){\color[rgb]{0,0,0}\makebox(0,0)[lt]{\lineheight{1.25}\smash{\begin{tabular}[t]{l}$\ug\setminus\rng$\end{tabular}}}}%
  \end{picture}%
\endgroup%

	\caption{\textbf{Left:} $\mst(\pc)$ (left), $\msa{2}\bigl(\del(\pc)\bigr)$
(center) and $\msa{2}\bigl(\Delta(\pc)\bigr)$ (right) of the same points
$\pc\subset\bbr^3$. For the two $\msa{2}$s, there is no remaining 1-cycle that
is unfilled by the triangles (i.e., it is 2-spanning), yet no cavity is created
(i.e., it is 2-acyclic).
	\textbf{Right:} illustration of the inclusions $\mst(\pc)\subset\rng(\pc)\subset\ug(\pc)\subset\del(\pc)$ on a planar point cloud $\pc\subset\bbr^2$.
	}
	\label{fig:2msa}
\end{figure}

Minimum spanning acycles, originally introduced in~\cite{kalai1983enumeration}, are a proposed generalization of minimum spanning trees.
Let $\calk$ be a simplicial complex and $S\subset\calk^\homodim$ a set of $\homodim$-simplices.
We say that $S$ is a $\homodim$\nobreakdash-\emph{acycle} if, informally, $S$ does not enclose any $\homodim$-dimensional hole.
This corresponds to $\betti{\homodim}(\calk^{(\homodim-1)}\cup S)=0$.
We say that $S$ is $\homodim$-\emph{spanning} if $S$ fills all the
$(\homodim-1)$-dimensional holes formed by all the $(\homodim-1)$-simplices.
This corresponds to $\betti{\homodim-1}(\calk^{(\homodim-1)}\cup S)=0$.
Finally, $S$ is said to be a $\homodim$-\emph{spanning acycle} if it has both properties.
The definition of a spanning tree corresponds to that of a 1-spanning acycle: a set of edges $E$ such that $\betti{0}(\pc\cup E)=0$ (spanning-ness: each vertex is reached by the set of edges) and $\betti{1}(\pc\cup E)=0$ (acyclicity).

Given a weight function $w:\calk^\homodim\to\bbr$, a minimum $\homodim$-spanning acycle is a $\homodim$-spanning acycle which minimizes the total weight of its simplices. When $w$ is injective, we have a total ordering on the $\homodim$-faces and the minimum $\homodim$-spanning acycle is unique (\cite{skraba2017randomly}, Lemma 3.8). When it is not (e.g., when $w$ is the diameter $\delta$), we only have a partial ordering, that however can be extended to a total order. In this case, the minimum $\homodim$-spanning acycle is unique with respect to the chosen total order, but its weights are independent of this choice (\cite{skraba2017randomly}, Lemma 3.14).
When $\homodim=1$, it corresponds to the definition of a
minimum spanning tree:
in particular, with the diameter as
weight, we have
$\mst(\pc)=\msa{1}\big(\Delta(\pc)\big)=\msa{1}\bigl(\del(\pc)\bigr)$.
\autoref{fig:2msa}, left, depicts minimum 1- and 2-spanning acycles for $\pc\subset\bbr^3$.

Minimum spanning acycles have only recently received significant attention, as they have interesting properties with birth and death values in persistence diagrams, as noticed in~\cite{skraba2017randomly, skraba2025central}. In particular, let $S\subset\calk^\homodim$ be a $\homodim$-minimum spanning acycle for a weight function $w$. Then $\{w(\sigma)\mid\sigma\in S\}$ is the set of death times of the $\PH{\homodim-1}$ classes of the filtration induced by $w$ on $\calk$, while $\{w(\sigma)\mid\sigma\in\calk^\homodim\setminus S\}$ is the set of birth times of the $\PH{\homodim}$ classes.
This is a generalization of the fact that the edges of the minimum spanning tree $\mst(\pc)$ are exactly the edges killing the \PH{0} classes of the (Delaunay)--Rips filtration of $\pc$
(and that the remaining edges give birth to \PH{1} classes).
This is commonly used to speed up the computation of 0-dimensional persistence diagrams thanks to a union-find data structure~\cite{cormen}.

\paragraph{Neighborhood graphs}

The relative neighborhood graph~\cite{toussaint_relative_1980} of $\pc$ is
defined as the set of edges $e=(x,y)$ such as there is no third point that is
closer to $x$ and $y$ than they are to each other, i.e.,
$\rng(\pc)=\{(x,y)\in\pc^2\mid\not\exists z\in\pc,\dist{x}{z}<\dist{x}{y}\text{
and }\dist{y}{z}<\dist{x}{y}\}$. It is a subgraph of the Delaunay complex
which can be computed in time complexity $\calo(n\log n)$
in the plane~\cite{supowit_relative_1983, jaromczyk_note_1987,
lingas_linear-time_1994}, but not in higher dimensions. Another
graph of interest is the Urquhart graph $\ug(\pc)$
\cite{urquhart_algorithms_1980} which is defined as the subgraph of the Delaunay
complex where the longest edge of each Delaunay triangle has been removed. Under
general position hypothesis, we have the inclusions
$\mst(\pc)\subseteq\rng(\pc)\subseteq\ug(\pc)\subseteq\del(\pc)$ (see
\autoref{fig:2msa}, right).
It was showed in~\cite{koyama2023faster} that the edges of $\rng(\pc)\setminus\mst(\pc)$ are exactly the edges that create a \PH{1} pair of positive persistence for the Rips filtration of $\pc$.

\section{Analysis}
\label{sec:analysis}

In this section, we establish how Delaunay--Rips persistence diagrams approximate Rips persistence diagrams, and what are the possible instabilities when the input is perturbed, both in theory and in practice.

Jung's theorem~\cite{jung1901ueber} states that any compact set $K\subset\bbr^\ddim$ of diameter $\delta(K)=\max_{x,y\in K}\lVert x-y\rVert_2$ is enclosed by a closed ball of radius $\meb(K)$, with:
\begin{align}
	\label{eq:jung}
	\delta(K) \leq 2\meb(K) \leq \delta(K)\underbrace{\sqrt{\frac{2\ddim}{\ddim+1}}}_{\triangleq\jung{\ddim}}
\end{align}
where the right inequality is precisely attained by the regular $\ddim$-simplices. In addition, for any simplicial complex $\calk$,
any $\homodim$-simplex $\sigma\in\calk^\homodim$ lives in a $\homodim$-dimensional affine space, hence:
\begin{equation}
	\label{eq:jung_simplex}
	\delta(\sigma) \leq 2\meb(\sigma) \leq \jung{\homodim} \delta(\sigma).
\end{equation}
Therefore, as the \cech\ and Delaunay--\cech\ filtrations writes respectively as
$\cechf{r}(\pc) = \left\{\sigma\subset\pc\mid\meb(\sigma)\leq r\right\}$ and
$\delcechf{r}(\pc) = \left\{\sigma\in\del(\pc)\mid\meb(\sigma)\leq r\right\}$,
we have the following inclusions between their $\homodim$-dimensional skeletons:
\begin{align}
	\label{eq:inclusions_jung}
	\begin{split}
		&\cechf{r}^{(\homodim)}(\pc) \subset
		\ripsf{r}^{(\homodim)}(\pc) \subset
		\cechf{\jung{\homodim}r}^{(\homodim)}(\pc)\\
		&\delcechf{r}^{(\homodim)}(\pc) \subset
		\delripsf{r}^{(\homodim)}(\pc) \subset
		\delcechf{\jung{\homodim}r}^{(\homodim)}(\pc).
	\end{split}
\end{align}
In addition, due to $\del(\pc)\subset\Delta(\pc)$, we also have the following inclusions:
\begin{align}
	\label{eq:inclusions_del}
	\begin{split}
		& \delcechf{r}^{(\homodim)}(\pc) \subset \cechf{r}^{(\homodim)}(\pc)\\
		& \delripsf{r}^{(\homodim)}(\pc) \subset \ripsf{r}^{(\homodim)}(\pc).
	\end{split}
\end{align}
Applying the homology functor on these simplicial inclusions provides induced inclusions between the corresponding homology groups, for example, for the first one, 
$\hgf{\homodim}{\cechn}{r}\hookrightarrow\hgf{\homodim}{\ripsn}{r}\hookrightarrow\hgf{\homodim}{\cechn}{\jung{\homodim+1}r}$.
In addition, $\cechf{r}$ can be transformed into $\delcechf{r}$ with a sequence of elementary collapses~\cite{bauer2017morse}.
On the homology level, this not only implies that $\hgf{\homodim}{\delcechn}{r}$ and $\hgf{\homodim}{\cechn}{r}$ are isomorphic, but also (see e.g., \cite{hatcher2005algebraic}) that for all $r$, the simplicial inclusion $\iota:\delcechf{r}^{(\homodim)}\hookrightarrow\cechf{r}^{(\homodim)}$ induces an isomorphism
\begin{equation}
	\label{eq:induced_isomorphism}
	\begin{tikzcd}[cramped, sep=scriptsize]
		\iota_\star:\hgf{\homodim}{\delcechn}{r} \arrow[r,hook,"\simeq"] & \hgf{\homodim}{\cechn}{r}
	\end{tikzcd}.
\end{equation}

\subsection{Rips and Delaunay--Rips comparison}
\label{sec:comp_r_dr}

Thanks to these relations, we can now show that the Delaunay--Rips persistence module provides an approximation of the Rips persistence module.

\begin{theorem}
	\label{lemma:rips_delrips}
	The $\homodim$-dimensional Rips and Delaunay--Rips persistence modules
	$\big(\hgf{\homodim}{\ripsn}{r}\big)_{r\geq0}$ and
$\big(\hgf{\homodim}{\delripsn}{r}\big)_{r\geq0}$ are multiplicatively,
strongly $\jung{\homodim+1}$-interleaved. As a corollary, the strong stability
theorem (see \autoref{sec:interleaving}) states that:
	\begin{equation}
		\label{eq:bottle_logdgms}
		\bottle\left(\log\dgmdr{\homodim}(\pc),\log\dgmrips{\homodim}(\pc)\right)\leq\log\jung{\homodim+1}.
	\end{equation}
\end{theorem}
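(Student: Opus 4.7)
The plan is to construct explicit homomorphisms realizing a multiplicative strong $\jung{\homodim+1}$-interleaving of the two persistence modules, then invoke the multiplicative version of the strong stability theorem stated in \autoref{sec:interleaving}. Writing $\eta=\jung{\homodim+1}$ for brevity, the forward map $\Phi_r\colon\hgf{\homodim}{\delripsn}{r}\to\hgf{\homodim}{\ripsn}{r\eta}$ is straightforward: I would take the homomorphism induced by the simplicial inclusion $\delripsf{r}\hookrightarrow\ripsf{r}\hookrightarrow\ripsf{r\eta}$, which is legitimate by \autoref{eq:inclusions_del}. The backward map $\Psi_r\colon\hgf{\homodim}{\ripsn}{r}\to\hgf{\homodim}{\delripsn}{r\eta}$ is more delicate, since no simplicial inclusion relates $\ripsf{r}$ and $\delripsf{r\eta}$ in either direction. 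Instead, I would route through the \cech\ and Delaunay--\cech\ filtrations by defining $\Psi_r$ as the composition
\begin{equation*}
\hgf{\homodim}{\ripsn}{r}\longrightarrow\hgf{\homodim}{\cechn}{r\eta}\xleftarrow{\;\simeq\;}\hgf{\homodim}{\delcechn}{r\eta}\longrightarrow\hgf{\homodim}{\delripsn}{r\eta},
\end{equation*}
where the first arrow is induced by $\skeleton{\ripsf{r}}{\homodim+1}\subset\skeleton{\cechf{r\eta}}{\homodim+1}$ (an instance of \autoref{eq:inclusions_jung} on the $(\homodim+1)$-skeleton), the middle arrow is the inverse of the isomorphism of \autoref{eq:induced_isomorphism}, and the last arrow is induced by the inclusion $\delcechf{r\eta}\subset\delripsf{r\eta}$ (which follows from $\meb(\sigma)\leq s\Rightarrow\delta(\sigma)\leq 2s$, the left inequality of \autoref{eq:jung_simplex}).

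The main technical obstacle is verifying the four commutative squares of \autoref{eq:interleaving}. Each such square reduces to showing that two maps $\hgf{\homodim}{\delripsn}{s/\eta}\to\hgf{\homodim}{\delripsn}{s\eta}$ (or their Rips counterparts) agree: a direct inclusion-induced map on one side, and a zigzag through \cech{} and Delaunay--\cech{} homology involving $\iota_\star^{-1}$ on the other. The key observation is that at any intermediate scale $s$, all the relevant complexes $\delripsf{s/\eta}$, $\ripsf{s/\eta}$, $\delcechf{s}$, $\cechf{s}$, $\delripsf{s}$, $\ripsf{s}$, $\cechf{s\eta}$, and $\delcechf{s\eta}$ fit into a single commutative diagram of simplicial inclusions on the $(\homodim+1)$-skeleton: for instance, $\delripsf{s/\eta}\subset\delcechf{s}$ holds because $\delta(\sigma)\leq 2s/\eta$ forces $\meb(\sigma)\leq s$ by \autoref{eq:jung_simplex}. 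Since the homology functor preserves this commutativity, and the isomorphism $\iota_\star$ of \autoref{eq:induced_isomorphism} is natural with respect to the inclusion $\delcechn\hookrightarrow\cechn$, the inverses $\iota_\star^{-1}$ can be transported across the diagrams, collapsing $\Psi_r\circ\Phi_{r/\eta}$ and $\Phi_{r\eta}\circ\Psi_r$ onto the expected direct inclusion-induced maps, and symmetrically for the Rips-to-Rips compositions.

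Once the interleaving is established, \autoref{eq:bottle_logdgms} follows immediately by applying the multiplicative strong stability theorem to the $\eta$-interleaving just constructed, reparameterizing on a log-scale as described in \autoref{sec:interleaving}.
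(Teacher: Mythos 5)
Your proposal is correct and follows essentially the same route as the paper's proof: the backward map $\Psi_r$ you construct (Rips $\to$ \cech{} at scale $r\jung{\homodim+1}$ $\to$ Delaunay--\cech{} via the inverse of the collapse-induced isomorphism $\to$ Delaunay--Rips) is exactly the map $\maprtodr{r}{k}$ used in the paper, and your verification of the interleaving squares by decomposing them into simplicial-level inclusions and transporting $\iota_\star^{-1}$ by naturality matches the paper's argument. No gaps.
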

\begin{proof}
	Using the incuded inclusions (\autoref{eq:inclusions_jung}, \autoref{eq:inclusions_del}) and induced isomorphisms (\autoref{eq:induced_isomorphism}) between homology groups, we can define linear maps $\maprtodr{r}{k}:\hgf{\homodim}{\ripsn}{r}\rightarrow\hgf{\homodim}{\delripsn}{\jung{\homodim+1}r}$ as:
	\begin{center}
		\begin{tikzcd}[cramped, sep=scriptsize] 
			\maprtodr{r}{k}:\hgf{\homodim}{\ripsn}{r} \arrow[r,hook]
			&\hgf{\homodim}{\cechn}{\jung{\homodim+1}r} \arrow[from=r,hook,"\iota_\star"',shift right] \arrow[r,"\iota_\star^{-1}"',shift right]
			&\hgf{\homodim}{\delcechn}{\jung{\homodim+1}r} \arrow[r,hook]
			&\hgf{\homodim}{\delripsn}{\jung{\homodim+1}r}
		\end{tikzcd}.
	\end{center}
	Now, to show that the persistence modules
	$\big(\hgf{\homodim}{\ripsn}{r}\big)_{r\geq0}$ and
	$\big(\hgf{\homodim}{\delripsn}{r}\big)_{r\geq0}$ are multiplicatively
	$\jung{\homodim+1}$-interleaved, it suffices to notice that the following
	diagrams (on the homology groups level) commute:
	\begin{center}
		\begin{tabular}{cc}
			\begin{tikzcd}[cramped, row sep=scriptsize, column sep=small]
				\hgf{\homodim}{\delripsn}{r} \arrow[r,hook] \arrow[d,hook] & \hgf{\homodim}{\delripsn}{\jung{\homodim+1}r} \\
				\hgf{\homodim}{\ripsn}{r} \arrow[ur,"\maprtodr{r}{k}"']
			\end{tikzcd}
			& 
			\begin{tikzcd}[cramped, row sep=scriptsize, column sep=small]
				& \hgf{\homodim}{\delripsn}{\jung{\homodim+1}r} \arrow[r,hook] & \hgf{\homodim}{\delripsn}{\jung{\homodim+1}r'} \\
				\hgf{\homodim}{\ripsn}{r} \arrow[r,hook] \arrow[ur,"\maprtodr{r}{k}"] & \hgf{\homodim}{\ripsn}{r'} \arrow[ur,"\maprtodr{r'}{k}"'] 
			\end{tikzcd}
			\\\\\\
			\begin{tikzcd}[cramped, row sep=scriptsize, column sep=small]
				& \hgf{\homodim}{\delripsn}{r} \arrow[d,hook] \\
				\hgf{\homodim}{\ripsn}{r/\jung{\homodim+1}} \arrow[r,hook] \arrow[ur,"\maprtodr{r/\jung{\homodim+1}}{k}"] & \hgf{\homodim}{\ripsn}{r}
			\end{tikzcd}
			&
			\begin{tikzcd}[cramped, row sep=scriptsize, column sep=small]
				\hgf{\homodim}{\delripsn}{r} \arrow[r,hook] \arrow[d,hook] & \hgf{\homodim}{\delripsn}{r'} \arrow[d,hook] \\
				\hgf{\homodim}{\ripsn}{r} \arrow[r,hook] & \hgf{\homodim}{\ripsn}{r'}
			\end{tikzcd}
		\end{tabular}
	\end{center}
	
	To see that, one can decompose the diagrams following the expression of $\maprtodr{r}{k}$. For instance, for the first diagram, 
	\begin{center}
		\begin{tikzcd}[cramped, row sep=scriptsize, column sep=scriptsize]
			\hgf{\homodim}{\delripsn}{r} \arrow[r,hook] \arrow[ddd,hook] \arrow[ddr,hook,dashed] \arrow[dr,hook,dashed] & \hgf{\homodim}{\delripsn}{\jung{\homodim+1}r} \\
			& \hgf{\homodim}{\delcechn}{\jung{\homodim+1}r} \arrow[u,hook] \arrow[d,hook,"\simeq"] \\
			& \hgf{\homodim}{\cechn}{\jung{\homodim+1}r} \\
			\hgf{\homodim}{\ripsn}{r} \arrow[ur,hook]
		\end{tikzcd}
	\end{center}
	commutes because its counterpart on the simplicial level commutes, and the induced isomorphic inclusion can be inverted while maintaining commutativity. The same applies to the second and third diagrams.
\end{proof}


The bound \autoref{eq:bottle_logdgms} is tight, since it is possible to find $X\subset\bbr^{\ddim+1}$ such that $\bottle\left(\log\dgmdr{\ddim}(X),\log\dgmrips{\ddim}(X)\right)$ is arbitrarily close to its upper bound $\log\jung{\ddim+1}$. For that, it suffices to take two antipodal points, e.g, $x_0 =(1,0,\ldots,0)$ and $x_1=(-1,0,\ldots,0)$ that will be joined by a Delaunay edge, the points of a regular $(\ddim+1)$-simplex  inscribed in $(1+\varepsilon)\bbs^\ddim$, and finally a sufficiently dense sampling of $(1+\varepsilon)\bbs^\ddim$ to ensure the existence of a $\PH{\ddim}$ class. It will be killed by the Delaunay edge $\{x_0,x_1\}$ of length 2 in the Delaunay--Rips filtration, and by an edge of the regular simplex of length $(1+\varepsilon)\sqrt{2(\ddim+2)/(\ddim+1)}=2(1+\varepsilon)/\jung{\ddim+1}$ in the Rips filtration.
Such configurations are shown for $\bbr^2$ and $\bbr^3$ in \autoref{fig:worstCases}, where the points are arranged respectively as a regular hexagon and a regular dodecahedron.


\begin{figure}
	\centering
	\begin{tabular}{cc}
		$\pc\subset\bbr^2$ & $\pc\subset\bbr^3$\\
		\def\svgwidth{.4\linewidth}
\begingroup%
  \makeatletter%
  \providecommand\color[2][]{%
    \errmessage{(Inkscape) Color is used for the text in Inkscape, but the package 'color.sty' is not loaded}%
    \renewcommand\color[2][]{}%
  }%
  \providecommand\transparent[1]{%
    \errmessage{(Inkscape) Transparency is used (non-zero) for the text in Inkscape, but the package 'transparent.sty' is not loaded}%
    \renewcommand\transparent[1]{}%
  }%
  \providecommand\rotatebox[2]{#2}%
  \newcommand*\fsize{\dimexpr\f@size pt\relax}%
  \newcommand*\lineheight[1]{\fontsize{\fsize}{#1\fsize}\selectfont}%
  \ifx\svgwidth\undefined%
    \setlength{\unitlength}{296.55363068bp}%
    \ifx\svgscale\undefined%
      \relax%
    \else%
      \setlength{\unitlength}{\unitlength * \real{\svgscale}}%
    \fi%
  \else%
    \setlength{\unitlength}{\svgwidth}%
  \fi%
  \global\let\svgwidth\undefined%
  \global\let\svgscale\undefined%
  \makeatother%
  \begin{picture}(1,0.95582278)%
    \lineheight{1}%
    \setlength\tabcolsep{0pt}%
    \put(0,0){\includegraphics[width=\unitlength,page=1]{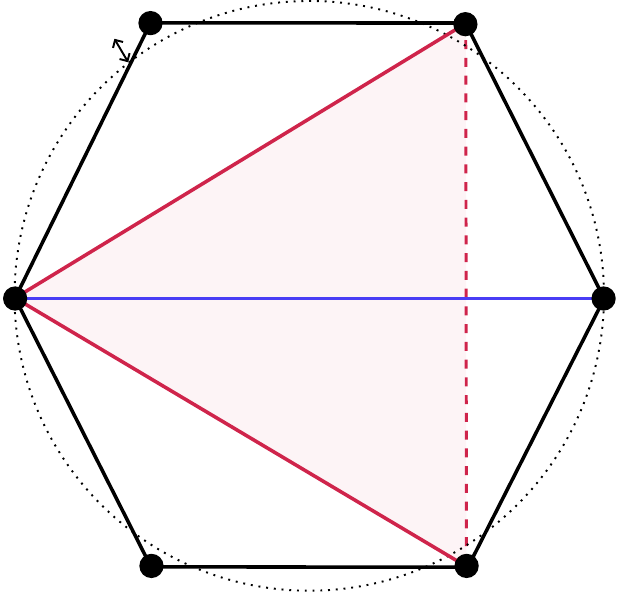}}%
    \put(0.53557776,0.49550591){\color[rgb]{0.28627451,0.23921569,0.96078431}\makebox(0,0)[t]{\lineheight{1.25}\smash{\begin{tabular}[t]{c}$2$\end{tabular}}}}%
    \put(0.16125385,0.84813123){\color[rgb]{0,0,0}\makebox(0,0)[t]{\lineheight{1.25}\smash{\begin{tabular}[t]{c}$\varepsilon$\end{tabular}}}}%
  \end{picture}%
\endgroup%
&
		\includegraphics[width=.4\linewidth]{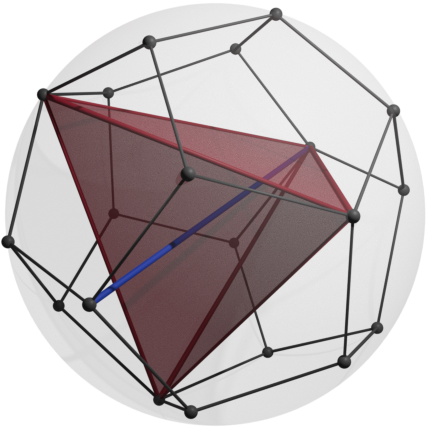}\\
		$\color{customBlue}\dgmdr{1}(\pc)\simeq\left\{(1,2)\right\}$ &
		$\color{customBlue}\dgmdr{2}(\pc)\simeq\left\{\left(\frac{2}{\sqrt{3}},2\right)\right\}$ \\
		$\color{customRed}\dgmrips{1}(\pc)\simeq\left\{\left(1,\sqrt{3}\right)\right\}$ & $\color{customRed}\dgmrips{2}(\pc)\simeq\left\{\left(\frac{2}{\sqrt{3}},\frac{4}{\sqrt{6}}\right)\right\}$
	\end{tabular}
	\vspace{.2cm}
	\caption{
		\textbf{Left}: $\pc\subset\bbr^2$ such that $\bottle\left(\log\dgmdr{1}(\pc),\log\dgmrips{1}(\pc)\right)$ is arbitrarily close to $\log\frac{2}{\sqrt{3}}=\log\jung{2}$. The points are arranged as a regular hexagon inscribed in the unit circle, slightly perturbed so that its Delaunay triangulation (plain edges) contains a diametrical edge (blue edge). The red dotted edge that exists in $\Delta(\pc)$ but not in $\del(\pc)$ makes the \PH{1} class less persistent in the Rips filtration than in the Delaunay--Rips filtration.
		\textbf{Right}: $\pc\subset\bbr^3$ such that $\bottle\left(\log\dgmdr{2}(\pc),\log\dgmrips{2}(\pc)\right)$ is arbitrarily close to $\log\frac{\sqrt{6}}{2}=\log\jung{3}$. The points are arranged as a regular dodecahedron inscribed in the unit sphere, slightly perturbed so that its Delaunay triangulation contains a diametrical edge (blue edge). A non-Delaunay regular tetrahedron (in red) makes the \PH{2} class less persistent in the Rips filtration than in the Delaunay--Rips filtration.
	}
	\label{fig:worstCases}
\end{figure}

For the particular case of $\homodim=0$, \autoref{eq:bottle_logdgms} gives more directly that 0-th Rips and Delaunay--Rips persistence modules are isomorphic, and thus that $\dgmrips{0}(\pc)=\dgmdr{0}(\pc)$. In the other cases ($\homodim\geq1$), this only gives an upper bound of the bottleneck distance between logarithmic diagrams. In practice, one could be more interested by a bound between the persistence diagrams themselves. Unfortunately, such a bound has to scale with the diameter of point cloud $\pc$:
\begin{align}
	\label{eq:bottle_dgms}
	\bottle\left(\dgmdr{\homodim}(\pc),\dgmrips{\homodim}(\pc)\right)\leq(\jung{\homodim+1}-1)\delta(\pc)
\end{align}
where $\delta(\pc)=\max\limits_{x,y\in\pc}\lVert x-y\rVert_2$ is the diameter of $\pc$.

\subsection{Delaunay--Rips instability}
\label{sec:dr_instability}

Now, with both \autoref{eq:ripsstab} and \autoref{eq:bottle_dgms}, we can naively write with the triangle inequality the maximum instability that may happen next to a Delaunay non-general position:
\begin{align}
	\begin{split}
	\bottle\left(\dgmdr{\homodim}(X),\dgmdr{\homodim}(Y)\right)
	&\leq\bottle\left(\dgmdr{\homodim}(X),\dgmrips{\homodim}(X)\right)\\
	&\quad+\bottle\left(\dgmrips{\homodim}(X),\dgmrips{\homodim}(Y)\right)\\
	&\quad+\bottle\left(\dgmrips{\homodim}(Y),\dgmdr{\homodim}(Y)\right)\\
	&\leq\left(\jung{\homodim+1}-1\right)\big(\delta(X)+\delta(Y)\big)
+ 2\dgh(X,Y)
	\label{eq:bottle_stab}
	\end{split}
\end{align}
However, this upper bound can be slightly improved by finding some interleaving relationships involving the Delaunay--Rips persistence modules of $X$ and $Y$, with the following result.

\begin{theorem}
	\label{lemma:delrips_instab}
	Let $X$ and $Y$ two point clouds. Then for any dimension $\homodim\geq0$:
	\begin{equation}
		\bottle\left(\dgmdr{\homodim}(X),\dgmdr{\homodim}(Y)\right)
		\leq\left(\jung{\homodim+1}-1\right)\max\{\delta(X),\delta(Y)\} + 2\dgh(X,Y).
		\label{eq:bottle_stab2}
	\end{equation}
\end{theorem}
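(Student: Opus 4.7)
The plan is to sharpen the naive bound of \eqref{eq:bottle_stab} by avoiding the two separate applications of Jung's factor $\jung{\homodim+1}$ that arise when one routes through the Rips persistence modules of $X$ and $Y$. Instead, I would construct a \emph{direct} interleaving between the Delaunay--Rips persistence modules of $X$ and $Y$ in which this factor appears only once, by routing through the Delaunay--\cech{} and \cech{} modules.

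First, mimicking the strategy of \autoref{lemma:rips_delrips}, I would define maps
\[
\Phi_r^\homodim : \hgfx{\homodim}{\delripsn}{r}{X} \to \hgfx{\homodim}{\delripsn}{\jung{\homodim+1}r + 2\dgh(X,Y)}{Y}
\]
and a symmetric $\Psi_r^\homodim$ by composing, in this order: the induced map from the inclusion \eqref{eq:inclusions_jung} sending $\hgfx{\homodim}{\delripsn}{r}{X}$ into $\hgfx{\homodim}{\delcechn}{\jung{\homodim+1}r}{X}$; the Delaunay--\cech{} to \cech{} isomorphism of \eqref{eq:induced_isomorphism}; the \cech{} persistence interleaving map of additive shift $2\dgh(X,Y)$ underlying \eqref{eq:cechstab}; the inverse isomorphism on $Y$; and the induced inclusion $\hgfx{\homodim}{\delcechn}{r}{Y} \hookrightarrow \hgfx{\homodim}{\delripsn}{r}{Y}$ from \eqref{eq:inclusions_del}. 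The commutativity of the interleaving squares of \eqref{eq:interleaving} would then be checked exactly as in the proof of \autoref{lemma:rips_delrips}, by decomposing each diagram through these intermediate spaces and invoking commutativity of the underlying simplicial inclusions.

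This chain yields a naturally $r$-dependent additive shift of $(\jung{\homodim+1}-1)r + 2\dgh(X,Y)$. To convert it into the constant-shift strong $\varepsilon$-interleaving required by the stability theorem, I would use that both $\delripsf{r}(X)$ and $\delripsf{r}(Y)$ stabilize to $\del(X)$ and $\del(Y)$ respectively once $r$ exceeds the point-cloud diameter, so no persistent feature lies beyond $r = \max\{\delta(X), \delta(Y)\}$. On this effective range the shift is uniformly bounded by $\varepsilon := (\jung{\homodim+1}-1)\max\{\delta(X),\delta(Y)\} + 2\dgh(X,Y)$; extending $\Phi_r$ and $\Psi_r$ trivially past this scale yields a valid strong $\varepsilon$-interleaving, and the stability theorem delivers \eqref{eq:bottle_stab2}.

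The main obstacle I anticipate is precisely this last conversion, namely turning a genuinely $r$-dependent shift (mixing Jung's multiplicative factor with an additive Gromov--Hausdorff term) into a valid constant additive interleaving amenable to the strong stability theorem. The stabilization of both Delaunay--Rips filtrations past the diameter scale is the key observation that makes this translation possible without losing the factor of two improvement over \eqref{eq:bottle_stab}.
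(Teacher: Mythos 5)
Your construction of the interleaving maps is exactly the one the paper uses: both route $\hgfx{\homodim}{\delripsn}{r}{X}$ up into the \cech{} module of $X$ at scale $\jung{\homodim+1}r$, across the Gromov--Hausdorff interleaving of \cech{} modules to $Y$, and back down through the Delaunay--\cech{} isomorphism of \autoref{eq:induced_isomorphism}, so that Jung's factor is paid only once. Where you genuinely diverge is in converting the resulting mixed shift $r\mapsto\jung{\homodim+1}r+\varepsilon$ (with $\varepsilon>2\dgh(X,Y)$) into a bottleneck bound. The paper reparameterizes by $f:r\mapsto\log\left(r+\frac{\varepsilon}{\jung{\homodim+1}-1}\right)$, chosen precisely so that $f(\jung{\homodim+1}r+\varepsilon)=f(r)+\log\jung{\homodim+1}$; this yields an honest additive $(\log\jung{\homodim+1})$-interleaving of the reparameterized modules, and \autoref{eq:bottle_stab2} is recovered by undoing $f$ on an optimal matching and using that all diagram coordinates are bounded by the point-cloud diameters. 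You instead truncate: since the shift exceeds the constant $\varepsilon_0=(\jung{\homodim+1}-1)\max\{\delta(X),\delta(Y)\}+\varepsilon$ only at scales where both modules have already stabilized, you post-compose with structure maps to obtain a constant-shift strong $\varepsilon_0$-interleaving. This also works and gives the same constant, but the phrase ``extending $\Phi_r$ and $\Psi_r$ trivially past this scale'' hides the one delicate point of your route: the four diagrams of \autoref{eq:interleaving} must still commute for indices straddling the stabilization threshold, which requires checking via naturality that the piecewise-defined maps are compatible and that the structure maps you implicitly invert there are genuinely isomorphisms past the diameter scale. That verification does go through, so your argument is sound; the paper's logarithmic change of variables simply trades your case analysis for an explicit algebraic back-conversion at the end.
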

\begin{proof}
	Let $\homodim\geq0$ be a fixed homology dimension and $\eta=\jung{\homodim+1}$.
	Let also $\varepsilon>2\dgh(X,Y)$.
	We know that $\Big(H\big(\cechf{r}(X)\big)\Big)_{r\geq0}$ and
	$\Big(H\big(\cechf{r}(Y)\big)\Big)_{r\geq0}$ are $\varepsilon$-interleaved (see
	Lemma 4.4 of \cite{chazal2014persistence}). We write
	$\mapgeoxtoy{r}{\homodim}:H(\cechf{r}(X))\rightarrow
	H\big(\cechf{r+\varepsilon}(Y)\big)$ and
	$\mapgeoytox{r}{\homodim}:H(\cechf{r}(Y))\rightarrow
	H\big(\cechf{r+\varepsilon}(X)\big)$ the corresponding linear maps.
	In addition to the induced inclusions (\autoref{eq:inclusions_jung}, \autoref{eq:inclusions_del}) and induced isomorphisms (\autoref{eq:induced_isomorphism}), they permit to define the following linear maps (we write $\hgdr{\homodim}{\delripsn}{r}{X}$ instead of $\hgfx{\homodim}{\delripsn}{r}{X}$ to simplify the notations): 
	\begin{center}
		\begin{tikzcd}[cramped,sep=scriptsize]
			\mapxtoy{r}{k}:\hgdr{\homodim}{\delripsn}{r}{X} \arrow[r,hook]
			&\hgfx{\homodim}{\cechn}{\eta r}{X} \arrow[r,"\mapgeoxtoy{\eta r}{k}"]
			&\hgfx{\homodim}{\cechn}{\eta r+\varepsilon}{Y} \arrow[from=r,hook,"\iota_\star"',shift right] \arrow[r,"\iota_\star^{-1}"',shift right]
			&\hgfx{\homodim}{\delcechn}{\eta r+\varepsilon}{Y} \arrow[r,hook]
			&\hgdr{\homodim}{\delripsn}{\eta r+\varepsilon}{Y}\\
			\mapytox{r}{k}:\hgdr{\homodim}{\delripsn}{r}{Y} \arrow[r,hook]
			&\hgfx{\homodim}{\cechn}{\eta r}{Y} \arrow[r,"\mapgeoytox{\eta r}{k}"]
			&\hgfx{\homodim}{\cechn}{\eta r+\varepsilon}{X} \arrow[from=r,hook,"\iota_\star"',shift right] \arrow[r,"\iota_\star^{-1}"',shift right]
			&\hgfx{\homodim}{\delcechn}{\eta r+\varepsilon}{X} \arrow[r,hook]
			&\hgdr{\homodim}{\delripsn}{\eta r+\varepsilon}{X}
		\end{tikzcd}
	\end{center}
	Now, we notice that the following diagrams involving these two maps commute:
	\begin{center}
		\begin{tabular}{cc}
			\begin{tikzcd}[cramped,row sep=scriptsize, column sep=small]
				\hgdr{\homodim}{\delripsn}{(r-\varepsilon)/\eta}{X} \arrow[rr,hook] \arrow[dr,"\mapxtoy{(r-\varepsilon)/\eta}{k}"'] & & \hgdr{\homodim}{\delripsn}{\eta r+\varepsilon}{X} \\
				& \hgdr{\homodim}{\delripsn}{r}{Y} \arrow[ur,"\mapytox{r}{k}"']
			\end{tikzcd}
			& 
			\begin{tikzcd}[cramped,row sep=scriptsize, column sep=small]
				& \hgdr{\homodim}{\delripsn}{\eta r+\varepsilon}{X} \arrow[r,hook] & \hgdr{\homodim}{\delripsn}{\eta r'+\varepsilon}{X} \\
				\hgdr{\homodim}{\delripsn}{r}{Y} \arrow[r,hook] \arrow[ur,"\mapytox{r}{k}"] & \hgdr{\homodim}{\delripsn}{r'}{Y} \arrow[ur,"\mapytox{r'}{k}"'] 
			\end{tikzcd}
			\\\\\\
			\begin{tikzcd}[cramped,row sep=scriptsize, column sep=small]
				& \hgdr{\homodim}{\delripsn}{r}{X} \arrow[dr,"\mapxtoy{r}{k}"] \\
				\hgdr{\homodim}{\delripsn}{(r-\varepsilon)/\eta}{Y} \arrow[rr,hook] \arrow[ur,"\mapytox{(r-\varepsilon)/\eta}{k}"] & & \hgdr{\homodim}{\delripsn}{\eta r+\varepsilon}{Y}
			\end{tikzcd}
			&
			\begin{tikzcd}[cramped,row sep=scriptsize, column sep=small]
				\hgdr{\homodim}{\delripsn}{r}{X} \arrow[dr,"\mapxtoy{r}{k}"'] \arrow[r,hook] & \hgdr{\homodim}{\delripsn}{r'}{X} \arrow[dr,"\mapxtoy{r'}{k}"] \\
				& \hgdr{\homodim}{\delripsn}{\eta r+\varepsilon}{Y} \arrow[r,hook] & \hgdr{\homodim}{\delripsn}{\eta r'+\varepsilon}{Y}
			\end{tikzcd}
		\end{tabular}
	\end{center}
	To see that, as in \autoref{lemma:rips_delrips}, we can decompose the diagrams following the expressions of $\mapxtoy{}{}$ and $\mapytox{}{}$. For instance, for the first diagram,
	\begin{center}
		\begin{tikzcd}[cramped,row sep=scriptsize, column sep=scriptsize]
			|[xshift=-5.7em]| \hgfx{\homodim}{\delripsn}{(r-\varepsilon)/\eta}{X} \arrow[r,hook] \arrow[dd,hook] & |[xshift=5.7em]| \hgfx{\homodim}{\delripsn}{\eta r+\varepsilon}{X} \\
			& |[xshift=5.7em]| \hgfx{\homodim}{\delcechn}{\eta r+\varepsilon}{X} \arrow[u,hook] \arrow[d,hook,"\simeq"]\\
			|[xshift=-2em]| \hgfx{\homodim}{\cechn}{r-\varepsilon}{X} \arrow[d,"\mapgeoxtoy{r-\varepsilon}{\homodim}"'] \arrow[r,hook,dashed] & |[xshift=5.7em]| \hgfx{\homodim}{\cechn}{\eta r+\varepsilon}{X} \\
			\hgfx{\homodim}{\cechn}{r}{Y} \arrow[r,hook,dashed] & |[xshift=3.7em]| \hgfx{\homodim}{\cechn}{\eta r}{Y} \arrow[u,"\mapgeoytox{\eta r}{\homodim}"'] \\
			\hgfx{\homodim}{\delcechn}{r}{Y} \arrow[u,hook,"\simeq"] \arrow[d,hook] \\
			\hgfx{\homodim}{\delripsn}{r}{Y} \arrow[uur,hook]
		\end{tikzcd}
	\end{center}
	commutes: for the top and bottom part, because it is the case for its
counterpart on the simplicial level; for the central part, because this
corresponds to the $\varepsilon$-interleaving between
$\Big(H\big(\cechf{r}(X)\big)\Big)_{r\geq0}$ and
$\Big(H\big(\cechf{r}(Y)\big)\Big)_{r\geq0}$. Then the isomorphic inclusions
can be inverted while maintaining commutativity.
	
	Now, we rescale those persistence modules so that we get a usable additive interleaving.
	Let $f:r\mapsto\log\left(r+\frac{\varepsilon}{\eta-1}\right)$, which is chosen so that:
	\begin{align*}
		f(\eta r+\varepsilon)&=\log\left(\eta r+\varepsilon+\frac{\varepsilon}{\eta-1}\right)\\
		&=\log\left(\eta\left(r+\frac{\varepsilon}{\eta-1}\right)\right)\\
		&=f(r)+\log\eta
	\end{align*}
	Therefore, with the four commutative diagrams above, the modules
	$\big(\hgdr{\homodim}{\delripsn}{f^{-1}(\alpha)}{X}\big)_{\alpha}$ and
$\big(\hgdr{\homodim}{\delripsn}{f^{-1}(\alpha)}{Y}\big)_{\alpha}$ are
additively $(\log\eta)$-interleaved. With the strong stability
theorem~\cite{chazal2009proximity}, we obtain that the image through $f$ of the
diagrams $\dgmdr{\homodim}(X)$ and $\dgmdr{\homodim}(Y)$ have their bottleneck distance bounded
by $\log\eta$, i.e.,
	\begin{equation}
\bottle\Big(f\big(\dgmdr{\homodim}(X)\big),f\big(\dgmdr{\homodim}(Y)\big)\Big)\leq\log\eta
\text { where } f(D)=\{(f(b),f(d))\mid(b,d)\in D\}.
		\label{eq:bottle_f_diags}
	\end{equation}
	
	Eventually, let $p=(x,y)\in\dgmdr{\homodim}(X)\cup\Delta$ and $p'=(x',y')\in\dgmdr{\homodim}(Y)\cup\Delta$, where $\Delta=\{(x,x)\mid x\geq0\}$ is the diagonal, such that $f(p)$ and $f(p')$ are paired together in an optimal pairing for \autoref{eq:bottle_f_diags}. We have for $x$ and $x'$ (this is the same for $y$ and $y'$): 
	\begin{align*}
		|f(x)-f(x')|&\leq\log\eta\\
		\left|\log\left(x+\frac{\varepsilon}{\eta-1}\right)-\log\left(x'+\frac{\varepsilon}{\eta-1}\right)\right|&\leq\log\eta\\
		\left|\log\frac{x+\frac{\varepsilon}{\eta-1}}{x'+\frac{\varepsilon}{\eta-1}}\right|&\leq\log\eta\\
		\frac{1}{\eta}\leq\frac{x+\frac{\varepsilon}{\eta-1}}{x'+\frac{\varepsilon}{\eta-1}}&\leq\eta
	\end{align*}
	Therefore:
	\begin{align*}
		x+\frac{\varepsilon}{\eta-1}\leq\eta x'+\frac{\eta\varepsilon}{\eta-1}\quad&\text{and}\quad x'+\frac{\varepsilon}{\eta-1}\leq\eta x+\frac{\eta\varepsilon}{\eta-1}\\
		x-x'\leq(\eta-1)x'+\varepsilon\quad&\text{and}\quad x'-x\leq(\eta-1)x+\varepsilon
	\end{align*}
	Thus, $|x-x'|\leq(\eta-1)\max\{x,x'\}+\varepsilon$, and similarly for $y$ and $y'$.
	Hence,
	\[\lVert p-p'\rVert_\infty\leq(\eta-1)\max\{\lVert p\rVert_\infty,\lVert p'\rVert_\infty\}+\varepsilon\leq(\eta-1)\max\{\delta(X),\delta(Y)\}+\varepsilon\]
	since coordinates in a Delaunay--Rips persistence diagram corresponds to edge lengths. Thus $\lVert p\rVert_\infty\leq\delta(X)$ for all $p\in\dgmdr{\homodim}(X)$.
	Therefore, using such an optimal pairing for \autoref{eq:bottle_f_diags} in
order to pair $\dgmdr{\homodim}(X)$ and $\dgmdr{\homodim}(Y)$ together shows that
	$\bottle\big(\dgmdr{\homodim}(X),\dgmdr{\homodim}(Y)\big)
	\leq(\eta-1)\max\{\delta(X),\delta(Y)\} + \varepsilon$, which concludes the proof.
\end{proof}

\subsection{Approximation and stability in practice}
\label{sec:distance-stability-practice}

\begin{figure}
	\includegraphics[width=.329\linewidth]{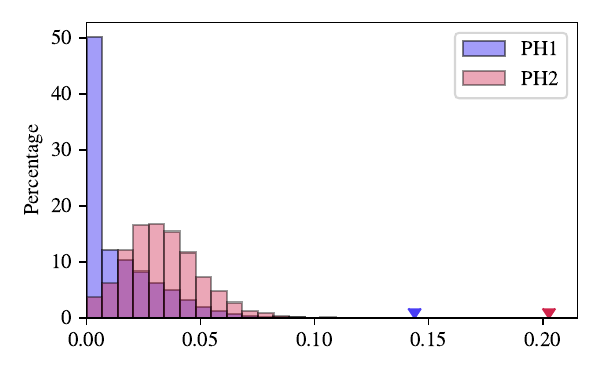}
	\includegraphics[width=.329\linewidth]{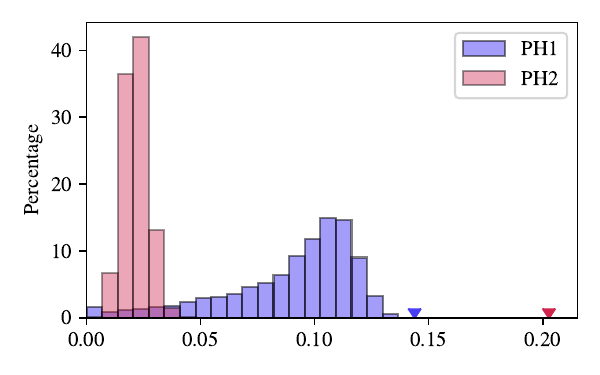}
	\includegraphics[width=.329\linewidth]{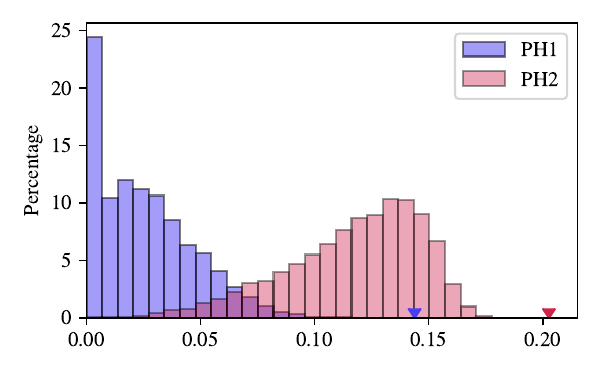}
	\caption{
		Histograms of the distribution of the bottleneck distance between
		logarithmic persistence diagrams for the Rips and Delaunay--Rips filtrations.
		Triangles indicates the theoretical upper bound of these distances for
		a given dimension of homology.
		\textbf{Left:} 10000 sets of 50 points uniformly distributed in $[0,1]^3$.
		\textbf{Center:} 10000 sets of 50 points distributed on a unit circle embedded in $\bbr^3$ with Gaussian noise.
		\textbf{Right:} 10000 sets of 50 points uniformly distributed on a unit 2-sphere with Gaussian noise.
	}
	\label{fig:hist_logbottle}
\end{figure}

\begin{figure}
	\hspace{.2cm}
	\begin{tabular}{ccccc}
		& & $[-1,1]^3$ & $\bbs^1$ & $\bbs^2$ \\
		\raisebox{.2cm}{\multirow{2}{*}{\rotatebox{90}{Rips}}} &
		\raisebox{.3cm}{\scalebox{.55}{\rotatebox{90}{$\bottle(\dgmrips{1}(X),\dgmrips{1}(Y))$}}} &
		\hspace{-1mm}\includegraphics[width=.25\linewidth]{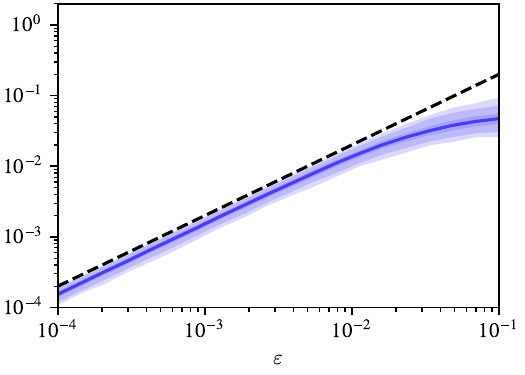}&
		\hspace{-2mm}\includegraphics[width=.25\linewidth]{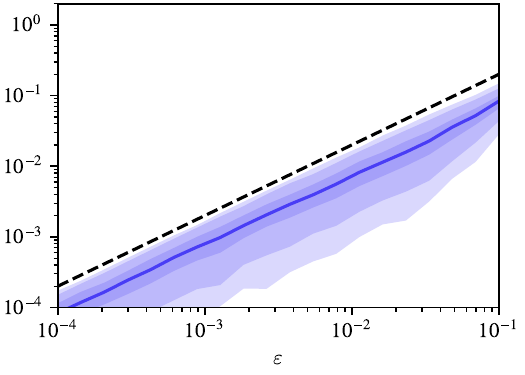}&
		\hspace{-2mm}\includegraphics[width=.25\linewidth]{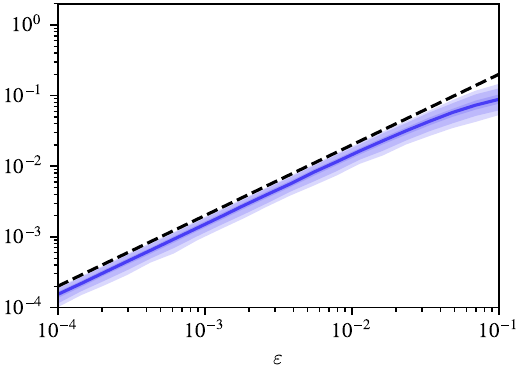}\\
		 &
		\raisebox{.3cm}{\scalebox{.55}{\rotatebox{90}{$\bottle(\dgmrips{2}(X),\dgmrips{2}(Y))$}}} &
		\hspace{-1mm}\includegraphics[width=.25\linewidth]{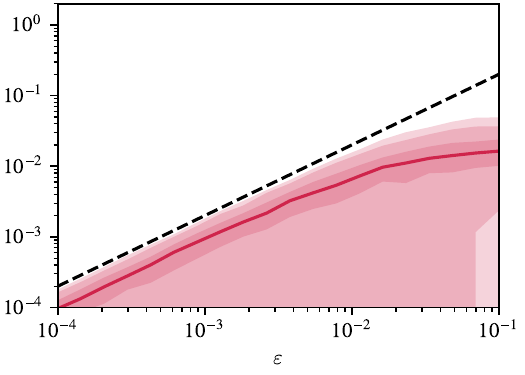}&
		\hspace{-2mm}\includegraphics[width=.25\linewidth]{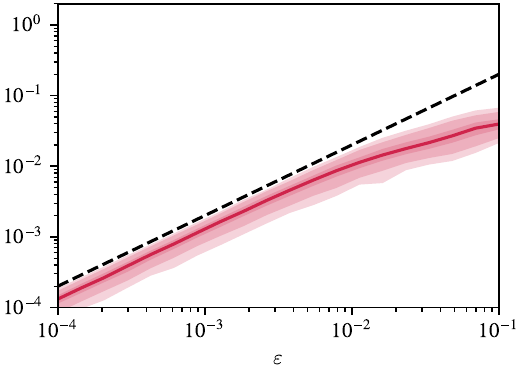}&
		\hspace{-2mm}\includegraphics[width=.25\linewidth]{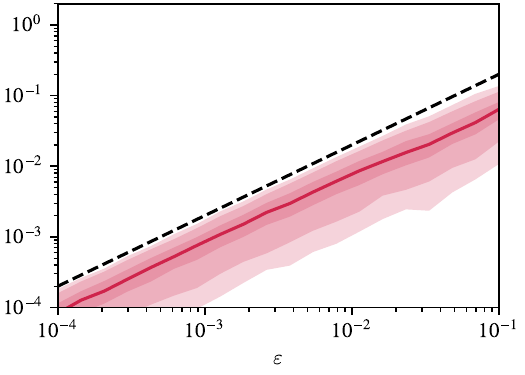}\\
		&&&\\
		\raisebox{1.1cm}{\multirow{2}{*}{\rotatebox{90}{Delaunay--Rips}}} &
		\raisebox{.3cm}{\scalebox{.55}{\rotatebox{90}{$\bottle(\dgmdr{1}(X),\dgmdr{1}(Y))$}}} &
		\hspace{-1mm}\includegraphics[width=.25\linewidth]{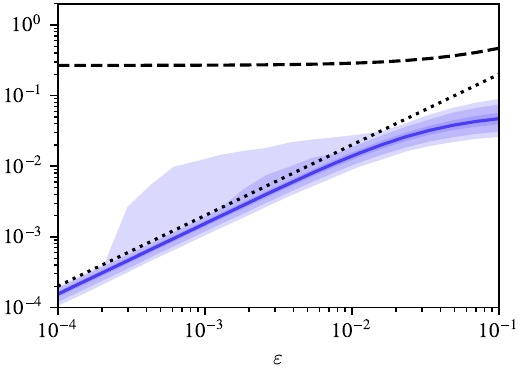}&
		\hspace{-2mm}\includegraphics[width=.25\linewidth]{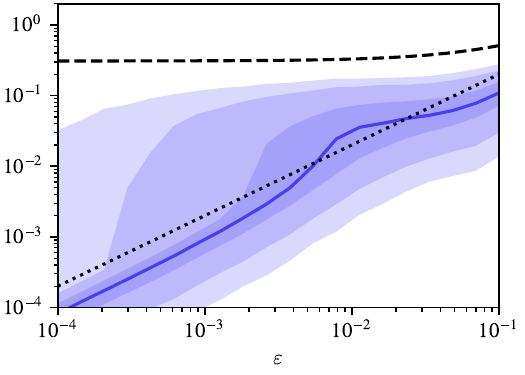}&
		\hspace{-2mm}\includegraphics[width=.25\linewidth]{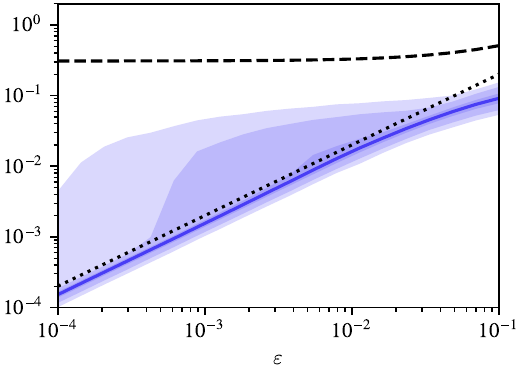}\\
		&
		\raisebox{.3cm}{\scalebox{.55}{\rotatebox{90}{$\bottle(\dgmdr{2}(X),\dgmdr{2}(Y))$}}} &
		\hspace{-1mm}\includegraphics[width=.25\linewidth]{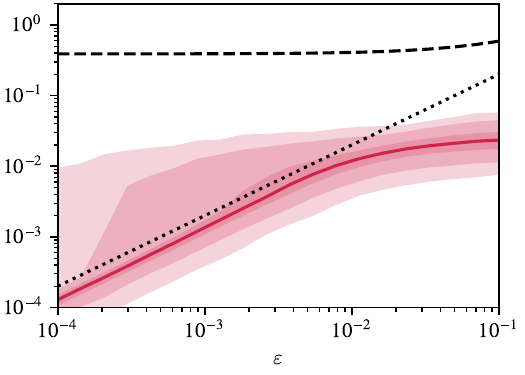}&
		\hspace{-2mm}\includegraphics[width=.25\linewidth]{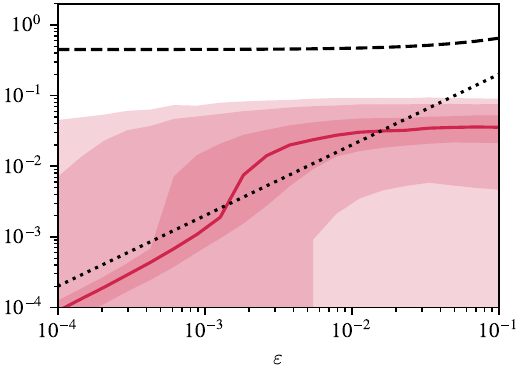}&
		\hspace{-2mm}\includegraphics[width=.25\linewidth]{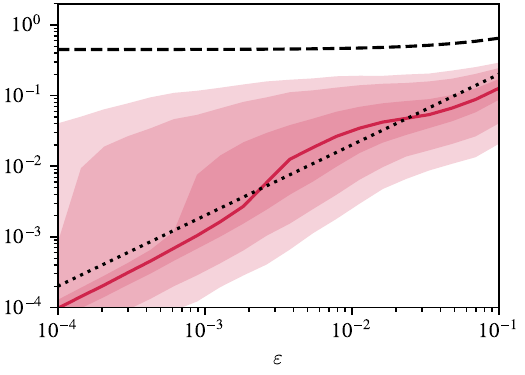}
	\end{tabular}
	\caption{Distribution of the bottleneck distance between Rips or Delaunay--Rips 1- and 2-dimensional persistence diagrams of two point clouds $X$ and $Y\subset\bbr^3$ where $\dgh(X,Y)\leq\varepsilon$. The left column corresponds to $\pc$ uniformly sampled in $[-1,1]^3$, the center one to $\pc$ sampled next to unit circle, and the right one to $\pc$ sampled next to the unit sphere. The dashed, black line corresponds to \autoref{eq:bottle_stab2} for the Delaunay--Rips filtration charts, and to \autoref{eq:ripsstab} (i.e., $\calw_\infty\leq2\varepsilon$) for the Rips filtration charts. The distributions are represented with their median (plain lines) and their 0.01-, 0.05- and 0.25-quantiles (areas with three color shades).}
	\label{fig:hist_stability}
\end{figure}

We show in \autoref{fig:hist_logbottle} the distribution of the bottleneck distance between Rips and Delaunay--Rips logarithmic 1- and 2-dimensional persistence diagrams for points in $\bbr^3$ with several distributions. This illustrates that the bound \autoref{eq:bottle_dgms} is pessimistic for uniformly sampled points (left), while \PH{1} and \PH{2} bounds can be closely reached for points sampled respectively on a circle (center) and a sphere (right), which correspond to the critical configurations described in \autoref{sec:comp_r_dr} and \autoref{fig:worstCases}.

In \autoref{fig:hist_stability} we compare the empiric stability of Rips (\autoref{eq:ripsstab}) and Delaunay--Rips (\autoref{eq:bottle_stab}) persistence diagrams to perturbations on the input. Contrary to the Rips case, using the Delaunay--Rips filtration can lead to a bottleneck distance beyond $2\dgh(X,Y)$ due to the term $\max\{\delta(X),\delta(Y)\}$ in \autoref{eq:bottle_stab2} that depends on $X$ and $Y$ directly rather than $\dgh(X,Y)$. Once again, this instability is worse for points distributed close to a circle or a sphere as this corresponds to the critical configurations described in \autoref{sec:comp_r_dr} and \autoref{fig:worstCases}.

Thus, in practice, the continuity of Delaunay--Rips persistence diagrams should not be assumed for applications, even though discontinuities informally depend on the presence of extreme and neat topological features (e.g., a dense sampling exactly on a circle or a sphere). Moreover, given a point cloud, Delaunay--Rips persistence diagrams still capture topological information that is close to that captured by Rips persistence diagrams, up to the size of the point cloud. This suggests an explanation for its usability for diverse applications as mentioned in~\cite{gabrielsson2020topology} (topological prior incorporation) or~\cite{mishra2023stability} (classification) which does not depend on a continuity assumption.

\section{Algorithm}
\label{sec:method}

In this section, we introduce an algorithm specifically dedicated to the computation of Delaunay--Rips persistence diagrams.
We begin with a few definitions and lemmas (\autoref{subsec:generalities}), before presenting the algorithm (\autoref{subsec:description}) and how it can be implemented (\autoref{subsec:details}, \autoref{subsec:parallelization}).

\subsection{Generalities}
\label{subsec:generalities}

Let $\pc\subset\bbr^d$ and $\calk$ either $\del(\pc)$ or $\Delta(\pc)$.
For each $\homodim\geq1$, we build a total order $\order{\homodim}$ on 
$\calk^\homodim$ that extends the partial order induced by the diameter, i.e., 
for all $\sigma,\sigma'\in\calk^\homodim$, if $\delta(\sigma)<\delta(\sigma')$, 
then $\sigma\order{\homodim}\sigma'$.
More precisely, we define by recurrence on $\homodim\geq1$:
\begin{align}
\begin{split}
	\label{eq:def_order}
	\sigma\order{1}\sigma'&\iff
		\delta(\sigma)<\delta(\sigma')
		\text{ or }
		(\delta(\sigma)=\delta(\sigma') \text{ and } \sigma\lex\sigma')\\
	\sigma\order{\homodim+1}\sigma'&\iff
		\max\partial\sigma\order{\homodim}\max\partial\sigma'
		\text{ or }
		(\max\partial\sigma=\max\partial\sigma' \text{ and } \sigma\lex\sigma')
\end{split}
\end{align}
where $\max$ is understood as with respect to $\order{\homodim}$ and where $\lex$ is the lexicographic order between simplices sorted with increasing vertices.
In the following, we write more consisely $\prec$ instead of $\order{\homodim}$ as $\homodim$ is given by $\sigma$ and $\sigma'$.

As mentioned in \autoref{sec:geometric-structures}, the minimum $\homodim$-spanning acycle is unique with respect to the chosen total order~\cite{skraba2017randomly} (even though the diameters of the simplices it contains is independent of this choice), so that we write in the following $\msa{\homodim}(\pc)$.
We omit specifying $\pc$ when this is unambiguous.

We define below the $\homodim$-dimensional relative neighborhood simplices $\rnsc{\homodim}$, for $1\leq\homodim<\ddim$:
\begin{equation}
    \rnsc{\homodim}(\calk) = \left\{\sigma\in\calk^\homodim\mid\forall\tau\in\Star{1}(\sigma, \calk),\sigma\prec\max\partial\tau\right\}
    \label{eq:rnsc}
\end{equation}
where $\Star{1}(\sigma,\calk) = \{\tau\in\calk\mid\sigma\subset\tau\text{ and }\dim\tau=\dim\sigma+1\}$ is the set of cofacets of a simplex $\sigma$ in $\calk$.
This construction offers a generalization -- in terms of dimension $\homodim$ --
\begin{wrapfigure}[10]{r}[0pt]{.21\linewidth}
	\includegraphics[width=\linewidth]{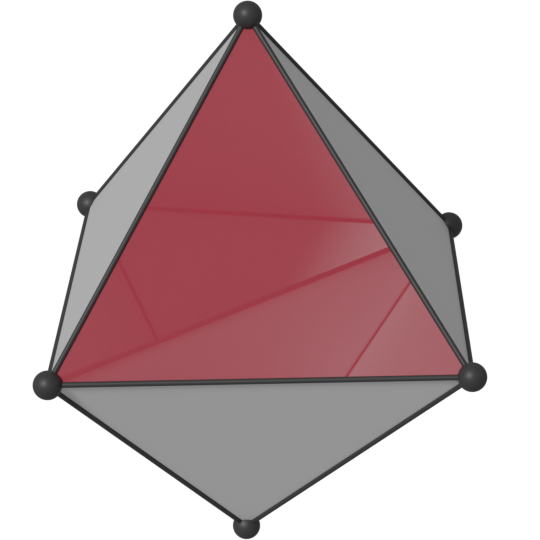}
	\caption{$\usc{2}$, with $\usc{2}\setminus\msa{2}$ in transparent red.}
	\label{fig:usc}
\end{wrapfigure}
of both Urquhart and relative neighborhood graphs, as
$\ug(\pc)=\rnsc{1}\big(\del(\pc)\big)$ and
$\rng(\pc)=\rnsc{1}\big(\Delta(\pc)\big)$.
As we focus on the Delaunay--Rips
filtration, we now set $\calk=\del(\pc)$ and we note more concisely the
\emph{$\homodim$-dimensional Urquhart simplices}
$\usc{\homodim}(\pc)=\rnsc{\homodim}\big(\del(\pc)\big)$ (see example in
\autoref{fig:usc}).

An apparent pair~\cite{bauer_ripser_2021} is a pair $(\sigma,\tau)$ such that $\sigma$ is the youngest facet of $\tau$ and $\tau$ is the oldest cofacet of $\sigma$.
Note that any apparent pair is indeed a persistent pair (see Lemma 3.3 of~\cite{bauer_ripser_2021}).
In addition, for filtrations based on the diameter (e.g., Rips or Delaunay--Rips filtrations), any $\homodim$-dimensional apparent pair $(\sigma,\tau)$, with $\homodim\geq1$, has persistence zero, since in this case  $\delta(\tau)=\max_{\sigma'\in\partial\tau}\delta(\sigma')=\delta(\sigma)$.
In the following we call \emph{zero} (resp. \emph{positive}) \emph{persistent pairs} the persistent pairs of persistence zero (resp. of positive persistence).
We show below some links that the above construction has with
(apparent) zero and positive persistence pairs.
Before that, we first cite a lemma that will be useful in the following.

\begin{lemma}[Exchange property, see Lemma 3.7 of~\cite{skraba2017randomly}]
	\label{lemma:exchange}
	Let $S\subset\calk^\homodim$ be a $\homodim$-spanning acycle of $\calk$ and $\sigma\in\calk^\homodim\setminus S$. For any $\sigma'\in S$ such that $\sigma'$ is part of a $\homodim$-cycle containing $\sigma$, $S\cup\{\sigma\}\setminus\{\sigma'\}$ is also a $\homodim$-spanning acycle of $\calk$.
\end{lemma}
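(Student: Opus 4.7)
The plan is to translate the statement into linear algebra and then apply the basis-exchange axiom of linear matroids. Working over some fixed field, note that $\skeleton{\calk}{\homodim-1}\cup S$ contains no $(\homodim+1)$-simplices, so the $\homodim$-acyclicity condition $\betti{\homodim}(\skeleton{\calk}{\homodim-1}\cup S)=0$ is equivalent to saying that $\partial_\homodim$ is injective on $\calc_\homodim(S)$, while the $\homodim$-spanning condition $\betti{\homodim-1}(\skeleton{\calk}{\homodim-1}\cup S)=0$ is equivalent to $\partial_\homodim(\calc_\homodim(S))=\calz_{\homodim-1}(\skeleton{\calk}{\homodim-1})$. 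In other words, $S$ is a $\homodim$-spanning acycle of $\calk$ if and only if $\{\partial\tau:\tau\in S\}$ is a basis of $\calz_{\homodim-1}(\skeleton{\calk}{\homodim-1})$, and the lemma then reduces to showing that replacing $\sigma'$ by $\sigma$ in this basis still yields a basis.

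By hypothesis there exists a $\homodim$-cycle $c$ supported on $S\cup\{\sigma\}$ whose support contains both $\sigma$ and $\sigma'$, so I would write $c=a_\sigma\sigma+a_{\sigma'}\sigma'+\sum_{\tau\in S\setminus\{\sigma'\}}a_\tau\tau$ with $a_\sigma,a_{\sigma'}\neq 0$. Applying $\partial_\homodim$ gives the key linear relation
\[
a_\sigma\,\partial\sigma+a_{\sigma'}\,\partial\sigma'+\sum_{\tau\in S\setminus\{\sigma'\}}a_\tau\,\partial\tau=0,
\]
from which one can solve either for $\partial\sigma$ or for $\partial\sigma'$ as a linear combination of the remaining boundaries.

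Set $S'=(S\setminus\{\sigma'\})\cup\{\sigma\}$. For acyclicity of $S'$, I would start from an arbitrary dependency $b_\sigma\partial\sigma+\sum_{\tau\in S\setminus\{\sigma'\}}b_\tau\partial\tau=0$, substitute the expression of $\partial\sigma$ coming from the cycle relation above, and obtain a linear combination of $\{\partial\tau:\tau\in S\}$ whose coefficient on $\partial\sigma'$ is proportional to $b_\sigma$, with a nonzero factor $a_{\sigma'}/a_\sigma$. Since $S$ is $\homodim$-acyclic, every coefficient must vanish, which forces $b_\sigma=0$ and then all remaining $b_\tau=0$. For the spanning property, since $S$ already spans $\calz_{\homodim-1}(\skeleton{\calk}{\homodim-1})$ and $\{\partial\tau:\tau\in S'\}$ differs from $\{\partial\tau:\tau\in S\}$ only by having $\partial\sigma$ in place of $\partial\sigma'$, it suffices to show that $\partial\sigma'$ lies in the span of $\{\partial\tau:\tau\in S'\}$; solving the same cycle relation for $\partial\sigma'$, which is allowed because $a_{\sigma'}\neq 0$, yields exactly such an expression.

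I do not anticipate a real obstacle: once the translation into the linear-algebraic picture is set up, both directions follow from a single elimination in the cycle relation. The only delicate point is the preliminary equivalence at the start, in particular noticing that $\skeleton{\calk}{\homodim-1}\cup S$ has no $(\homodim+1)$-simplices so that $\homodim$-acyclicity really does reduce to injectivity of $\partial_\homodim$ on $\calc_\homodim(S)$ rather than to a genuine kernel-modulo-image statement.
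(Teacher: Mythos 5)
Your proof is correct. Note first that the paper does not actually prove this lemma: it is imported verbatim from Lemma~3.7 of the cited reference on minimal spanning acycles, so there is no in-paper argument to compare against. Your route --- identifying $\homodim$-spanning acycles with bases of $\calz_{\homodim-1}(\skeleton{\calk}{\homodim-1})$ via $\tau\mapsto\partial\tau$ (using that $\skeleton{\calk}{\homodim-1}\cup S$ has no $(\homodim+1)$-simplices, so acyclicity is injectivity of $\partial_\homodim$ on $\calc_\homodim(S)$ and spanning-ness is surjectivity onto the cycle space) and then running the Steinitz/matroid basis-exchange elimination on the cycle relation --- is essentially the argument underlying the cited source, which treats spanning acycles as bases of the simplicial matroid. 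The elimination itself is sound: solving for $\partial\sigma'$ gives the spanning property of $S\cup\{\sigma\}\setminus\{\sigma'\}$, and substituting the expression for $\partial\sigma$ into an arbitrary dependency forces $b_\sigma=0$ because $a_{\sigma'}\neq0$, giving independence. One point worth making explicit: you read the hypothesis as requiring the cycle to be supported on $S\cup\{\sigma\}$, which the statement as printed leaves implicit. That reading is not optional --- with a cycle allowed to run through simplices outside $S\cup\{\sigma\}$ the claim is false already for graphs (take $S$ a spanning star of $K_4$, $\sigma$ an edge of the opposite triangle, and $\sigma'$ the spoke avoided by the fundamental cycle; a $4$-cycle of $K_4$ contains both, yet the exchange disconnects a vertex). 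Your restriction matches the cited lemma, where the cycle is the fundamental circuit created by adding $\sigma$ to $S$, and it is the only hypothesis under which the conclusion holds.
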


The following result generalizes the fact that in general position, $\mst$ is included in both $\rng$ and $\ug$ (see \autoref{sec:geometric-structures}).
\begin{lemma}
	\label{lemma:msa_in_usc}
	For all $1\leq\homodim<\ddim$, $\msa{\homodim}\subset\usc{\homodim}$.
\end{lemma}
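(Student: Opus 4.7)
The plan is to prove the contrapositive: if a $\homodim$-simplex $\sigma \in \del(\pc)^\homodim$ does not belong to $\usc{\homodim}$, then $\sigma$ cannot belong to $\msa{\homodim}$.

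First I would unfold the definition of $\usc{\homodim} = \rnsc{\homodim}\bigl(\del(\pc)\bigr)$: the hypothesis $\sigma \notin \usc{\homodim}$ provides a cofacet $\tau \in \Star{1}(\sigma, \del(\pc))$ with $\sigma \not\prec \max \partial \tau$, and since $\sigma$ is itself a facet of $\tau$, the totality of $\prec$ forces $\sigma = \max \partial \tau$. Writing $\partial \tau = \sigma + \sigma_1 + \cdots + \sigma_{\homodim+1}$ over $\mathbb{Z}/2$, every other facet satisfies $\sigma_i \prec \sigma$ strictly, and $\partial \tau$ is a non-zero $\homodim$-cycle containing $\sigma$.

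Next, supposing for contradiction that $\sigma \in \msa{\homodim}$, I would argue that at least one $\sigma_i$ must lie outside $\msa{\homodim}$: otherwise the entire chain $\partial \tau$ would be supported on $\msa{\homodim}$, and since the subcomplex $\del(\pc)^{(\homodim-1)} \cup \msa{\homodim}$ contains no $(\homodim+1)$-simplices, $\partial \tau$ would represent a non-trivial class in its $\homodim$-th homology, contradicting the acyclicity of $\msa{\homodim}$. Picking such an $\sigma_i \notin \msa{\homodim}$, I would invoke the exchange property (Lemma~\ref{lemma:exchange}) on the cycle $\partial \tau$ to produce a new $\homodim$-spanning acycle $S' = \msa{\homodim} \cup \{\sigma_i\} \setminus \{\sigma\}$.

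Finally, I would compare $S'$ with $\msa{\homodim}$: their diameter sums differ by $\delta(\sigma_i) - \delta(\sigma) \leq 0$. If this inequality is strict, the minimality of $\msa{\homodim}$ is already violated. The delicate case is $\delta(\sigma_i) = \delta(\sigma)$: here $\sigma_i \lex \sigma$ holds by definition of $\prec$, and I would invoke the uniqueness of the minimum spanning acycle with respect to the total order $\prec$ (\cite{skraba2017randomly}, Lemma 3.14). The swap then produces an acycle that is strictly earlier in a tie-broken weighting, once again contradicting the choice of $\msa{\homodim}$. I expect this tie-breaking step to be the main subtlety, because one must carefully leverage the fact that $\msa{\homodim}$ is defined as the unique minimum with respect to $\prec$ rather than merely with respect to $\delta$.
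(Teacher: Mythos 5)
Your proof is correct and follows essentially the same route as the paper's: extract a cofacet $\tau$ with $\sigma=\max\partial\tau$, use acyclicity of $\msa{\homodim}$ to find a facet $\sigma_i\notin\msa{\homodim}$ in the cycle $\partial\tau$, and apply the exchange property to contradict minimality. Your extra care about the tie-breaking case $\delta(\sigma_i)=\delta(\sigma)$ is a valid elaboration of what the paper leaves implicit, namely that minimality is taken with respect to the total order $\prec$ rather than the diameter alone.
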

\begin{proof}
    Let $\sigma\in\msa{\homodim}$. Suppose that $\sigma\notin\usc{\homodim}$.
    By definition, there exists $\tau\in\Star{1}(\sigma, \del)$ such that $\sigma=\max\partial\tau$.
   	Now, there exists a $\sigma'\in\partial\tau\setminus\{\sigma\}$ such that $\sigma'\notin\msa{\homodim}$ (otherwise $\partial\tau$ would constitute a $\homodim$-cycle in $\msa{\homodim}$, which is acyclic).
   	According to the exchange property of spanning acycles (\autoref{lemma:exchange}), $\msa{\homodim}\cup\{\sigma'\}\setminus\{\sigma\}$ is also a $\homodim$-spanning acycle, with $\sigma'\prec\sigma$, which contradicts the minimality of $\msa{\homodim}$.
\end{proof}

\begin{lemma}
	\label{lemma:non_apparent_in_usc}
	All $\homodim$-simplices that give birth to a positive $\homodim$-dimensional persistent pair are contained in $\usc{\homodim}\setminus\msa{\homodim}$.
\end{lemma}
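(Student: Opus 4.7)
The plan is to establish the two conditions of the statement separately. That any birth simplex of a $\PH{\homodim}$ class must lie outside $\msa{\homodim}$ is immediate from the property recalled at the end of \autoref{sec:geometric-structures}: using the diameter as weight function, the $\homodim$-simplices contained in $\msa{\homodim}$ are exactly the death simplices of the $\PH{\homodim-1}$ classes, while the birth simplices of the $\PH{\homodim}$ classes are precisely those of $\del(\pc)^{\homodim}\setminus\msa{\homodim}$. The harder part is therefore to prove the inclusion in $\usc{\homodim}$.

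To that end, I would argue contrapositively: if $\sigma\in\del(\pc)^{\homodim}$ does not belong to $\usc{\homodim}$, then $\sigma$ necessarily belongs to a pair of persistence zero, so it cannot be the birth simplex of a positive persistence pair. Negating the defining condition in \autoref{eq:rnsc} provides a cofacet $\tau\in\Star{1}(\sigma,\del(\pc))$ with $\max\partial\tau=\sigma$. Consider $\tau^{*}=\min\Star{1}(\sigma,\del(\pc))$, the oldest cofacet of $\sigma$ under $\prec$. The crux of the argument is to show that $\max\partial\tau^{*}=\sigma$ as well: since $\sigma\in\partial\tau^{*}$ one always has $\max\partial\tau^{*}\succeq\sigma$, and a strict inequality would entail $\tau\prec\tau^{*}$ by the recursive definition of $\order{\homodim+1}$ given in \autoref{eq:def_order}, contradicting the minimality of $\tau^{*}$.

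With this, $(\sigma,\tau^{*})$ satisfies both conditions of an apparent pair: $\sigma$ is the youngest facet of $\tau^{*}$ and $\tau^{*}$ is the oldest cofacet of $\sigma$. As recalled just above the statement, apparent pairs in a diameter-based filtration are persistence pairs with $\delta(\tau^{*})=\delta(\sigma)$, hence of persistence zero, which concludes the contrapositive. The main subtlety is the exploitation of the tailored total order $\order{\homodim+1}$: its recursive comparison of $(\homodim+1)$-simplices via their youngest facet is exactly what forces the oldest cofacet $\tau^{*}$ to inherit $\sigma$ as its own youngest facet, a step that would not go through under an arbitrary tie-breaking rule.
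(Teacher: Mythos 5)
Your proposal is correct and follows essentially the same route as the paper: both parts (exclusion from $\msa{\homodim}$ via the death-simplex characterization, and inclusion in $\usc{\homodim}$ via exhibiting an apparent pair $(\sigma,\tau^\star)$ with the oldest cofacet) match the paper's argument. You merely make explicit the step the paper leaves implicit, namely that the recursive order $\order{\homodim+1}$ forces the oldest cofacet of $\sigma$ to have $\sigma$ as its youngest facet.
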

\begin{proof}
	Let $\sigma\in\del^{\homodim}$ such that $\sigma\notin\usc{\homodim}$.
	By definition, there exists at least one $\tau\in\Star{1}(\sigma,\del)$ such that $\sigma=\max\partial\tau$.
	Let $\tau^\star$ the first of them that appears in the filtration
	(i.e., the smallest for $\prec$).
	Then, by definition, $(\sigma,\tau^\star)$ is an apparent $\PH{\homodim}$ pair, and therefore a zero persistence pair.
	Therefore, a $\homodim$-simplex that gives birth to a positive $\PH{\homodim}$ pair is necessarily in $\usc{\homodim}$.
	In addition, a $\homodim$-simplex in $\msa{\homodim}$ corresponds to the death of a $\PH{\homodim-1}$ pair, therefore not to the birth of a $\PH{\homodim}$ pair, hence the result.  
\end{proof}

We can establish a more precise result for \PH{1}.
Indeed, it is established in~\cite{koyama2023faster} that, for the Rips filtration, the edges in $\rng\setminus\mst$ exactly correspond to the births of the positive \PH{1} pairs.
We establish below an analogous result for the Delaunay--Rips filtration, involving $\ug$ instead of $\rng$.
\begin{lemma}[Theorem 3.10 of~\cite{bauer_ripser_2021}]
	\label{lemma:ph1_apparent_is_zero}
	Let $X$ with distinct pairwise distances.
	Among the \PH{1} pairs, the apparent pairs are exactly the zero persistence pairs.
\end{lemma}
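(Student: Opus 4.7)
The plan is to prove the two inclusions separately. One direction is essentially immediate from an observation already made in the paper; the other hinges on the uniqueness of persistence pairs together with the distinct pairwise distances hypothesis.

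For the ``apparent implies zero persistence'' direction, I would simply recall the remark made just before the lemma: for a diameter-based filtration, any $\homodim$-dimensional apparent pair $(\sigma,\tau)$ with $\homodim\geq 1$ satisfies $\delta(\sigma)=\delta(\tau)$ and therefore has persistence zero.

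For the converse, let $(\sigma,\tau)$ be a \PH{1} persistence pair with persistence zero, so $\delta(\sigma)=\delta(\tau)$. My first step is to establish that $\sigma=\max\partial\tau$: since $\delta(\tau)$ equals the length of the longest edge of $\tau$, and the distinct pairwise distances hypothesis makes this longest edge unique, $\sigma$ must be it. Next, I would introduce $\tau^\star=\min\Star{1}(\sigma,\calk)$, the oldest cofacet of $\sigma$, and verify that $\sigma=\max\partial\tau^\star$ as well. For this, $\sigma\subset\tau^\star$ forces $\delta(\tau^\star)\geq\delta(\sigma)=\delta(\tau)$, while $\tau^\star\preceq\tau$ gives $\delta(\tau^\star)\leq\delta(\tau)$, so $\delta(\tau^\star)=\delta(\sigma)$; then any edge of $\tau^\star$ other than $\sigma$ must have strictly smaller length than $\sigma$ (by the distinct distances hypothesis), hence precedes $\sigma$ in $\prec$.

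These two facts make $(\sigma,\tau^\star)$ an apparent pair, which is known to be a persistence pair by Lemma 3.3 of~\cite{bauer_ripser_2021}. Since a simplex appears in at most one persistence pair, $\tau=\tau^\star$, and therefore $(\sigma,\tau)$ itself is apparent. The main subtlety, rather than a genuine difficulty, is that without the distinct pairwise distances assumption the lexicographic tie-breaking in $\order{2}$ could allow $\tau^\star\prec\tau$ with matching diameters and without $\sigma=\max\partial\tau^\star$; the hypothesis neutralizes this by guaranteeing that every triangle has a unique longest edge, so the argument above goes through directly.
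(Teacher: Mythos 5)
The paper does not actually prove this statement: it imports it by citation as Theorem~3.10 of~\cite{bauer_ripser_2021}, so there is no in-paper proof to compare against. Your argument is a correct, self-contained derivation, and it is essentially the standard one. The forward direction is exactly the remark the paper makes just before the lemma ($\delta(\tau)=\max_{\sigma'\in\partial\tau}\delta(\sigma')=\delta(\sigma)$ for an apparent pair in a diameter filtration). For the converse, your two key steps both check out: zero persistence gives $\delta(\sigma)=\delta(\tau)$, and injectivity of $\delta$ on edges (from distinct pairwise distances) forces $\sigma$ to be the unique longest edge of $\tau$, hence $\sigma=\max\partial\tau$ and in particular $\sigma\in\partial\tau$, so $\Star{1}(\sigma,\calk)$ is nonempty and contains $\tau$; then the sandwich $\delta(\sigma)\leq\delta(\tau^\star)\leq\delta(\tau)=\delta(\sigma)$ (the upper bound using that $\order{2}$ refines the diameter order) shows $\sigma=\max\partial\tau^\star$ as well, making $(\sigma,\tau^\star)$ apparent. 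Concluding $\tau=\tau^\star$ via Lemma~3.3 of~\cite{bauer_ripser_2021} and the uniqueness of the persistence pairing is the clean way to finish, and your closing remark correctly identifies where the distinct-distances hypothesis is genuinely needed (without it, a triangle can have two longest edges and the lexicographic tie-breaking can decouple the pairing from the apparent-pair condition). I see no gap.
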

\begin{lemma}
	\label{lemma:ph1_ug_is_positive}
	Let $\pc$ in general position (unique pairwise distances, in addition to the general position for the Delaunay complex).
	The edges in $\ug\setminus\mst$ exactly correspond to the births of \PH{1} pairs with a positive persistence for the Delaunay--Rips filtration.
\end{lemma}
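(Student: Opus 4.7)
The plan is to combine Lemma \ref{lemma:non_apparent_in_usc}, Lemma \ref{lemma:ph1_apparent_is_zero}, and the identifications $\ug(\pc)=\usc{1}(\pc)$ and $\mst(\pc)=\msa{1}(\pc)$ in order to obtain both inclusions of the claimed equality. The forward inclusion is an immediate specialization of Lemma \ref{lemma:non_apparent_in_usc} to $\homodim=1$: every edge that gives birth to a positive \PH{1} pair lies in $\usc{1}\setminus\msa{1}=\ug\setminus\mst$, so only the reverse inclusion requires actual work.

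For the reverse inclusion, I would fix $e\in\ug\setminus\mst$ and show that $e$ gives birth to a \PH{1} pair of positive persistence. First, since $e\notin\mst=\msa{1}$, the edge $e$ does not kill a \PH{0} class (recall from \autoref{sec:geometric-structures} that the edges killing \PH{0} classes are exactly those of $\mst$), so $e$ must give birth to some \PH{1} class, paired with a Delaunay triangle $t$. I would then argue by contradiction: if the pair $(e,t)$ had zero persistence, Lemma \ref{lemma:ph1_apparent_is_zero} (applicable thanks to the distinct pairwise distances in the general position hypothesis) would make it an apparent pair, which would force $e=\max\partial t$ with $t\in\Star{1}(e,\del)$. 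This directly contradicts $e\in\usc{1}$, which by definition requires $e\prec\max\partial\tau$ for every $\tau\in\Star{1}(e,\del)$.

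The main subtlety to check is that Lemma \ref{lemma:ph1_apparent_is_zero}, originally formulated for the Rips filtration, indeed transfers to the Delaunay--Rips setting. Since the Delaunay--Rips filtration uses the same diameter-based values as the Rips filtration -- the only change being that cofacets are taken within $\del(\pc)$ rather than $\Delta(\pc)$ -- and since the notions of apparent pair and of zero-persistence \PH{1} pair depend only on the filtration order restricted to $\del(\pc)$, Bauer's argument carries over verbatim. Beyond this bookkeeping, the proof is just a short contradiction combining the two preceding lemmas with the hypothesis that $e\in\ug$.
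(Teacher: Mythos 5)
Your proposal is correct and follows essentially the same route as the paper: the forward inclusion via \autoref{lemma:non_apparent_in_usc} with $\usc{1}=\ug$ and $\msa{1}=\mst$, and the reverse inclusion by noting that a zero-persistence \PH{1} pair is apparent (\autoref{lemma:ph1_apparent_is_zero}), hence its birth edge is the maximal facet of a cofacet and thus not in $\ug$. Your extra remark that Bauer's apparent-pair lemma transfers to the Delaunay--Rips setting is a sensible piece of diligence the paper leaves implicit, but it does not change the argument.
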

\begin{proof}
	Recall that $\usc{1}=\ug$ and $\msa{1}=\mst$.
	Hence, \autoref{lemma:non_apparent_in_usc} gives that the edges giving birth to a positive \PH{1} pair are in $\ug\setminus\mst$.  
	Reciprocally, if $(\sigma,\tau)$ is a zero persistence \PH{1} pair, \autoref{lemma:ph1_apparent_is_zero} states that it is an apparent pair. Then $\sigma\subset\tau$ and $\sigma=\max\partial\tau$.
	Therefore $\sigma\notin\ug$, hence the result.
\end{proof}

Finally, the choice made for the total order (\autoref{eq:def_order}) enables the following result:
\begin{lemma}
	\label{lemma:usc_in_msa}
	Let $\sigma$ be $\homodim$-simplex.
	If $\sigma\prec\max\partial\tau$ for all $\tau\in\Star{1}(\sigma,\msa{\homodim+1})$, then $\sigma\in\usc{\homodim}$.
	In other words, to determine whether a simplex is in $\usc{\homodim}$, it suffices to check if it is not the largest facet of any of its cofacets in $\msa{\homodim+1}$ (instead of all its cofacets).
\end{lemma}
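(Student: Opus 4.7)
The plan is to prove the contrapositive: assuming $\sigma\notin\usc{\homodim}$, I construct some $\tau^\star\in\Star{1}(\sigma,\msa{\homodim+1})$ with $\max\partial\tau^\star=\sigma$, which directly contradicts the hypothesis (since $\sigma\prec\sigma$ fails). By the definition $\usc{\homodim}=\rnsc{\homodim}\big(\del(\pc)\big)$, the assumption immediately yields a first cofacet $\tau_0\in\Star{1}(\sigma,\del)$ satisfying $\sigma=\max\partial\tau_0$. If $\tau_0\in\msa{\homodim+1}$ we are already done with $\tau^\star=\tau_0$; otherwise, I produce $\tau^\star$ inside $\msa{\homodim+1}$ by an exchange argument.

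Since $\msa{\homodim+1}$ is $(\homodim+1)$-spanning, every $\homodim$-cycle of $\skeleton{\del}{\homodim}\cup\msa{\homodim+1}$ is a boundary, so there exists a $(\homodim+1)$-chain $c$ supported in $\msa{\homodim+1}$ with $\partial c=\partial\tau_0$. Then $\tau_0+c$ is a $(\homodim+1)$-cycle containing $\tau_0$, and the exchange property (\autoref{lemma:exchange}) guarantees that $\msa{\homodim+1}\cup\{\tau_0\}\setminus\{\tau'\}$ is again a $(\homodim+1)$-spanning acycle for every $\tau'\in\mathrm{supp}(c)$. The minimality and uniqueness of $\msa{\homodim+1}$ with respect to $\prec$ (cf.~\cite{skraba2017randomly}, Lemma~3.14) then force $\tau'\prec\tau_0$ for each such $\tau'$.

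Unpacking the recursion in \autoref{eq:def_order} at dimension $\homodim+1$, $\tau'\prec\tau_0$ entails $\max\partial\tau'\preceq\max\partial\tau_0=\sigma$. Independently, since the coefficient of $\sigma$ in $\partial c=\partial\tau_0$ is nonzero, at least one $\tau^\star\in\mathrm{supp}(c)$ must contain $\sigma$ as a face, and for this $\tau^\star$ we get $\max\partial\tau^\star\succeq\sigma$. Combining the two inequalities yields $\max\partial\tau^\star=\sigma$, with $\tau^\star\in\msa{\homodim+1}$, which completes the contrapositive.

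The most delicate point, and the one that justifies the particular recursive construction of $\prec$ in \autoref{eq:def_order}, will be securing the strict inequality $\tau'\prec\tau_0$ when $\delta(\tau')=\delta(\tau_0)$: here the diameter-weight minimality alone is insufficient, and I will have to appeal to the lexicographic tiebreak built into $\prec$ together with the uniqueness-of-minimum statement for spanning acycles under any total extension of the partial diameter order.
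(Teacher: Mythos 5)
Your proof is correct and follows essentially the same route as the paper's: both hinge on the exchange property (\autoref{lemma:exchange}), the minimality of $\msa{\homodim+1}$ with respect to $\prec$, and the recursive structure of the order in \autoref{eq:def_order}. If anything, your version is slightly more careful, since by producing the bounding chain $c$ with $\partial c=\partial\tau_0$ you explicitly verify the cycle hypothesis needed to invoke the exchange lemma and exhibit a concrete witness $\tau^\star\in\Star{1}(\sigma,\msa{\homodim+1})$ with $\max\partial\tau^\star=\sigma$, whereas the paper's proof applies the exchange to an arbitrary cofacet $\tau'\in\Star{1}(\sigma,\msa{\homodim+1})$ without checking that $\tau'$ lies on a $(\homodim+1)$-cycle through $\tau^\star$.
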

\begin{proof}
	Suppose by contradiction that we have a $\homodim$-simplex $\sigma$ that verifies this condition, but such that $\sigma\notin\usc{\homodim}$.
	This means that there exists a $\tau^\star\in\Star{1}(\sigma)$ such that $\sigma=\max\partial\tau^\star$.
	However, by hypothesis, any $\tau\in\Star{1}(\sigma,\msa{\homodim+1})$ verifies $\sigma\prec\max\partial\tau$, which implies here that $\tau^\star\not\in\msa{\homodim+1}$.
	In addition, as $\msa{\homodim+1}$ is spanning, there exists a $\tau'\in\Star{1}(\sigma,\msa{\homodim+1})$, which therefore verifies $\sigma\prec\max\partial\tau'$.
	Thus, we have $\max\partial\tau^\star\prec\max\partial\tau'$, hence $\tau^\star\prec\tau'$ (see \autoref{eq:def_order}).
	With the exchange property (\autoref{lemma:exchange}), $\msa{\homodim+1}\cup\{\tau^\star\}\setminus\{\tau'\}$ is a $(\homodim+1)$-spanning acycle with $\tau^\star\prec\tau'$, which contradicts the minimality of $\msa{\homodim+1}$.
\end{proof}

\subsection{Description}
\label{subsec:description}

In this section, we turn to the description of an algorithm that computes the persistence diagrams of the Delaunay--Rips filtration of a point cloud in arbitrary dimension. The idea is to gather $\homodim$-simplices into $\homodim$-cells delimited by $\usc{\homodim-1}$ in order to reduce the number of reduction steps in the computation of $\PH{\homodim-1}$.

\subsubsection{Codimension-1 persistent homology}
\label{subsubsec:codim1}

For the specific case of persistent homology of codimension 1 (i.e., \PH{\ddim-1}), we can rely on a duality approach.
For that, recall that \autoref{lemma:non_apparent_in_usc} means that the simplices that are not in $\usc{\ddim-1}$ correspond to births of apparent, zero \PH{\ddim-1} pairs. To determine the non-apparent \PH{\ddim-1} pairs, we consider the $\ddim$-cells delimited by $\usc{\ddim-1}$ (they geometrically correspond to $\ddim$-polytopes). We can endow each of them with the diameter of largest Delaunay simplex they contain, as filtration value.
Then, by duality, finding \PH{\ddim-1} is equivalent to computing \PH{0} of the dual graph where dual nodes are these $\ddim$-cells and dual edges are the $\usc{\ddim-1}$ simplices between them, with all weights set to their opposite value.
All $\usc{\ddim-1}$ simplices that are unpaired at the end of the reduction gives $\msa{\ddim-1}$.
Note that this is equivalent to performing a reverse--delete algorithm on $\usc{\ddim-1}$ to obtain $\msa{\ddim-1}$ (like minimum spanning trees, minimum spanning acycles can be computed thanks to greedy algorithms like Kruskal's or reverse--delete~\cite{skraba2017randomly}).

\subsubsection{Intermediate persistent homologies}
\label{subsubsec:intermediate}

For computing \PH{\homodim}, with $1\leq\homodim\leq\ddim-2$, we follow the same strategy to ignore zero persistent pairs, i.e., by gathering $(\homodim+1)$-simplices into $(\homodim+1)$-cells. The first step is to establish $\usc{\homodim}$ that can be obtained from $\msa{k+1}$ thanks to \autoref{lemma:usc_in_msa}. However, unlike the codimension-1 case, $\usc{\homodim}$ does not unambiguously define $(\homodim+1)$-cells within $\msa{\homodim+1}$, since the latter has a non-manifold structure. Indeed, it contains $\homodim$-simplices that are connected to more than two $(\homodim+1)$-simplices (see \autoref{fig:3dmethod-concepts} for $\ddim=3$ and $\homodim=1$). We define those \emph{non-manifold} $\homodim$-simplices as:
\begin{equation}
	\label{eq:nmk}
	\nmj{\homodim} = \left\{\sigma\in\del^{\homodim}\mid|\Star{1}(\sigma,\msa{\homodim+1})|>2\right\}.
\end{equation}
\begin{figure}
	\begin{center}
		\includegraphics[height=.27\linewidth]{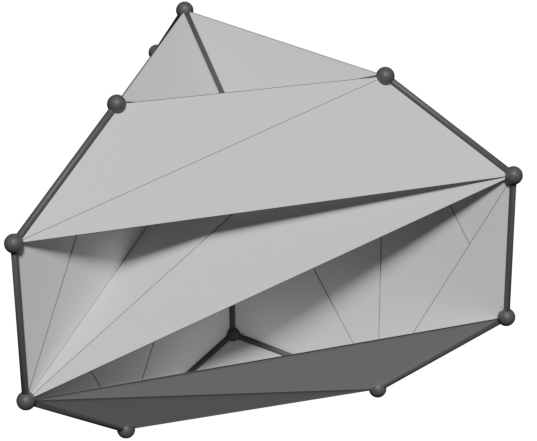}\hspace{2mm}
		\includegraphics[height=.27\linewidth]{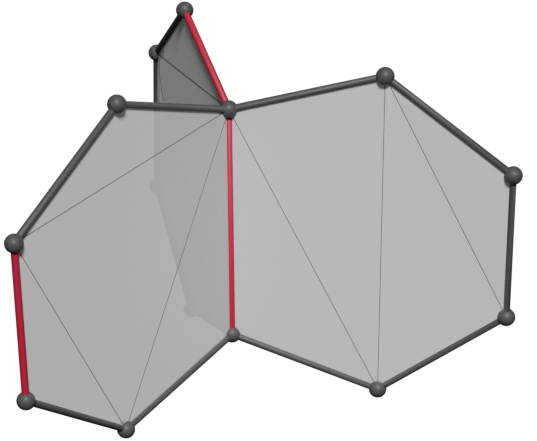}\hspace{2mm}
		\includegraphics[height=.27\linewidth]{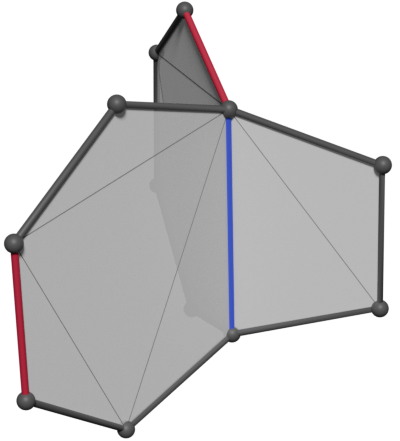}
	\end{center}
	\caption{
	\textbf{Left:} A point cloud $\pc\subset\bbr^3$ and its $\msa{2}$, with $\ug$ highlighted.
	\textbf{Center:} This point cloud has three \PH{1} positive persistent pairs. The three edges in $\ug\setminus\mst$ are in red (one of them is also in $\nmj{1}$). The edges of the $\mst$ are the black and thick edges.
	\textbf{Right:} Same thing, but with only two \PH{1} positive persistent pairs. However, here, there is an edge in $\nmj{1}\setminus\ug$, shown in blue. Thus, three finite polygons are still considered in $\msa{2}$, and one of them will give an apparent pair.
	}
	\label{fig:3dmethod-concepts}
\end{figure}
Then, we rather consider the $(\homodim+1)$-cells that are -- this time unambiguously -- delimited within $\msa{\homodim}$ by the union $\usc{\homodim}\cup\nmj{\homodim}$, at the price of dealing with some additional apparent pairs (exactly $|\nmj{\homodim}\setminus\usc{\homodim}|$).
These $(\homodim+1)$-cells are given as filtration value the diameter of the largest simplex they contain. Pairing them with $\usc{\homodim}\cup\nmj{\homodim}$ gives $\PH{\homodim}$. The $\homodim$-simplices that are still unpaired at the end of the reduction exactly gives $\msa{\homodim}$. \autoref{algo:intermediary-persistence} gives an overview of the algorithm and \autoref{subsec:details} describes more precisely each step and the data structures they use.

\begin{algorithm}
	\DontPrintSemicolon
	\label{algo:intermediary-persistence}\caption{Obtain $\PH{\homodim}$ from $\PH{\homodim+1}$}
	\KwIn{Point cloud $\pc\subset\bbr^\ddim$, $\msa{\homodim+1}$}
	\KwOut{$\dgmdr{\homodim}$ and $\msa{\homodim}$}
	Compute $\usc{\homodim}$ and non-manifold simplices $\nmj{\homodim}$ in $\msa{\homodim+1}$\;
	Determine the $(\homodim+1)$-cells formed by $\usc{\homodim}\cup\nmj{\homodim}$ within $\msa{\homodim+1}$\;
	Pair those $(\homodim+1)$-cells with the $\homodim$-simplices in $\usc{\homodim}\cup\nmj{\homodim}$ to obtain $\dgmdr{\homodim}$\;
	$\msa{\homodim}$ is given by the $\homodim$-simplices in $\usc{\homodim}$ that are still unpaired\;
\end{algorithm}

\subsubsection{1-dimensional persistent homology}
\label{subsubsec:dim1}


For the specific case of 1-dimensional persistent homology, we can exploit the additional knowledge of the edges that give birth to \PH{1} pairs before the reduction. Indeed, among $\ug\cup\nmj{1}$ edges, only those in $(\ug\cup\nmj{1})\setminus\mst$ are creating \PH{1} pairs.
Therefore, during the reduction, we can only consider the latters in the boundaries.
\autoref{algo:1-persistence} gives an overview of the algorithm and \autoref{subsec:details} describes more precisely each step and the data structures they use.

Note that for the specific context of $\pc\subset\bbr^3$, only the codimension-1 case (for \PH{2}, see \autoref{subsubsec:codim1}) and this 1-dimensional case (for \PH{1} and \PH{0}) is required.

\begin{algorithm}
	\DontPrintSemicolon
	\label{algo:1-persistence}\caption{Obtain \PH{0} and \PH{1}}
	\KwIn{Point cloud $\pc\subset\bbr^\ddim$, $\msa{2}$}
	\KwOut{Persistence diagrams $\dgmdr{0}$ and $\dgmdr{1}$}
	Extract $\ug$ and $\nmj{1}$ from $\msa{2}$ (see \autoref{lemma:usc_in_msa})\;
	Apply Kruskal's algorithm on $\ug$ to compute $\mst$ and $\dgmdr{0}$\;
	Determine the polygons formed by $\ug\cup\nmj{1}$ in $\msa{2}$\;
	Pair those polygons with $(\ug\cup\nmj{1})\setminus\mst$ edges to obtain $\dgmdr{1}$\;
\end{algorithm}

\subsection{Data structures}
\label{subsec:details}

We describe below more precisely each step of the algorithms described in \autoref{subsec:description} and the data structures they use.


\paragraph{Delaunay complex}

We use CGAL~\cite{cgal:pt-t3-24b} to compute the Delaunay complex of $\pc\in\bbr^\ddim$, with identifiers on vertices and \emph{full cells}, and a virtual point at infinity to deal with its boundary (i.e., the border of the convex hull of $\pc$). This implementation offers the possibility to construct the Delaunay complex in parallel only for $\ddim=3$.

\paragraph{Urquhart facets and cells}

$\usc{\ddim-1}$ is built by enumerating the Delaunay $(\ddim-1)$-simplices and checking the local condition that they are not the largest (for $\prec$) facet of one of its adjacent $\ddim$-simplices (see \autoref{eq:rnsc}).
During this computation, we maintain a union-find data structure on the $\ddim$-simplices that enables deducing efficiently the \emph{Urquhart $\ddim$-cells} delimited by $\usc{\ddim-1}$.
More precisely, when we encounter a non-$\usc{\ddim-1}$ simplex, we merge the two classes associated with its (at most two) cofacets.
At the end of the computation, the Urquhart $\ddim$-cells are implicitly represented by the roots on this union-find data structure.
In addition, for each class, we retrieve the largest $\ddim$-simplex it contains.
 
\paragraph{Codimension-1 persistent pairs}

Computing \PH{\ddim-1} involves pairing the above Urquhart $\ddim$-cells with the $\usc{\ddim-1}$ simplices, which can be solved by relying on duality.
Indeed, we can rather compute \PH{0} of the dual graph where dual nodes are the Urquhart cells with the opposite diameter of their largest inner $\ddim$-simplex as filtration value, and where dual edges are the $\usc{\ddim-1}$ simplices with their opposite diameter as filtration value.
Simultaneously, we construct $\msa{\ddim-1}$ as the $(\ddim-1)$-simplices whose cofacets are already in the same union-find class (as deleting them would create a $(\ddim-1)$-cycle).

\paragraph{Other persistent pairs}

We can compute $\usc{\homodim}$ by enumerating the $\homodim$-simplices in $\msa{\homodim+1}$ (i.e., all the Delaunay $\homodim$-simplices, since it is spanning).
Leveraging \autoref{lemma:usc_in_msa}, it suffices to check the local condition that they are not the largest facet of one of their cofacets in $\msa{\homodim+1}$.
We also store the connectivity from $\homodim$-simplices to $\msa{\homodim+1}$ simplices in a hash table.
To handle the non-manifold structure of $\msa{\homodim+1}$, we also determine the $\homodim$-simplices in $\nmj{\homodim}$ (those with more than two cofacets, see \autoref{eq:nmk} and \autoref{fig:3dmethod-concepts}).
Now, we can construct the \emph{Urquhart $(\homodim+1)$-cells}, delimited by $\usc{\homodim}\cup\nmj{\homodim}$ within $\msa{\homodim+1}$.
For that, like in the codimesion-1 case, we maintain a union-find data structure over the $\msa{\homodim+1}$ simplices.
When encountering a simplex that is not in $\usc{\homodim}\cup\nmj{\homodim}$, 
we merge the two union-find classes associated with its (at most two) cofacets.
We also retrieve for each class the largest $(\homodim+1)$-simplex it contains.
Finally, a reduction procedure permits to obtain \PH{\homodim} by pairing the Urquhart $(\homodim+1)$-cells with simplices in $\usc{\homodim}\cup\nmj{\homodim}$. This corresponds to \autoref{algo:intermediary-persistence}.

\paragraph{0- and 1-dimensional persistent pairs}

Additionally, in the specific case where $\homodim=1$, we apply Kruskal's algorithm on $\usc{1}=\ug$ to obtain $\mst$ (of which $\dgmdr{0}(\pc)$ can be deduced). Then, we can only consider the edges in $(\ug\cup\nmj{1})\setminus\mst$ in the boudaries in the reduction procedure that pairs \emph{Urquhart polygons} with these edges. This corresponds to \autoref{algo:1-persistence}.

\subsection{Parallelization}
\label{subsec:parallelization}
Algorithms \ref{algo:intermediary-persistence} and \ref{algo:1-persistence} described above can be parallelized. Apart from the easily parallelizable tasks (traversals, sorting), three steps in particular require special care.
First, establishing connectivity between $\homodim$- and $(\homodim-1)$-simplices in $\msa{\homodim}$ can be done in parallel using a concurrent hash table implementation.
Second, computing Urquhart cells can be done in parallel using a concurrent version of the union-find data structure, using wait-free algorithms that rely on CAS (compare-and-swap) operations~\cite{anderson1991wait}.
Finally, the reduction step that pairs $\homodim$-cells with $(\homodim-1)$-simplices can also be done in parallel~\cite{morozov2020towards, guillou2023discrete}: when a persistence pair $(\sigma,\tau)$ is created, it is temporary. If another thread later wants to pair $\sigma$ with a $\tau'$ such that $\tau'\prec\tau$, then $(\sigma,\tau)$ is updated into $(\sigma,\tau')$ and the reduction of $\tau$ restarts.

\section{Results}
\label{sec:results}

All results were obtained on a laptop computer with 64 GB of RAM and a Core i7-13850HX CPU (8 cores at 5.3 GHz and 12 cores at 3.8 GHz).
We implemented our method in C++.
For computing Delaunay complexes, we used the implementations of CGAL~\cite{fabri_cgal_2009}, either specialized 
for $\bbr^3$~\cite{cgal:pt-t3-24b} or for arbitrary
dimension~\cite{cgal:hdj-t-24b}. Only the specialization to $\bbr^3$ benefits from an
optional parallel implementation. However, this parallelization is not always advantageous, and can even be slower than the single-threaded version depending on the distribution of points in $\pc$.

\paragraph{Compared methods}
We compared our approach with PHAT~\cite{bauer2017phat} and Gudhi~\cite{maria2014gudhi}.
For PHAT, the Delaunay triangulation is built
with CGAL, then the boundary matrix is built
with a code adapted from an
add-on in the PHAT repository for computing persistence diagrams of
$\alpha$-filtrations~\cite{phat_addon}, and finally reduced with the default
algorithm and data structure provided by PHAT.
For Gudhi, a simplex tree containing the Delaunay complex is built
with the $\alpha$-complex module~\cite{gudhi:AlphaComplex} (without
computing minimum enclosing balls), then the filtration values are set on the
edges (i.e., their length) before being propagated to the rest of the simplices,
and the persistence diagram is computed with the persistent cohomology
module~\cite{gudhi:PersistentCohomology}.
Finally, giotto-tda~\cite{JMLR:v22:20-325} computes the Delaunay complex with the implementation of Scipy~\cite{virtanen2020scipy} and passes the sparse distance matrix of Delaunay edges to Ripser~\cite{bauer_ripser_2021} which computes the Rips persistent homology on this Delaunay complex. However, we did not include it in our comparison because it produces incorrect persistence diagrams and crashes when $n$ is too high due to an internal limitation on simplex indices.

\paragraph{Performance}
\autoref{fig:timings-3} compares the running time of our method specific to $\bbr^3$, on point clouds distributed uniformly in the unit cube (left), on the unit circle (center) with some Gaussian noise for the circle.
In order to compare with the Rips filtration, we also measured the running time of Ripser~\cite{bauer_ripser_2021} for computing the 0-, 1-, and 2-dimensional persistence diagrams, showing -- as expected -- a much worse asymptotic behavior. For completeness, we also measured the running time of Euclidean PH1~\cite{koyama2023faster}, a method that computes 1-dimensional persistence diagrams of point clouds in $\bbr^3$.
\autoref{fig:timings-d} does the same comparison on uniformly distributed points clouds in $\bbr^d$, $4\leq d\leq8$.
\autoref{table:speedups} shows the average running time ratio (with and without including the Delaunay complex computation step) and memory consumption ratio between our method and PHAT, by dimension.
Overall, we observe that the running time ratio between our method and PHAT increases from 0.30 to 0.76 as $d$ increases from $3$ to $8$.
If we consider the Delaunay complex as the entry and omit its computation in the running time, the corresponding acceleration ratio varies from 0.20 to 0.73.
Thus, our acceleration becomes less important as the dimension increases.
Note that our implementation becomes noncompetitive for $d\geq9$.
In fact, $d=9$ is already substantial for computing Delaunay complexes (which, for only 200 points, requires approximately 40 seconds and 850 MB of memory).
Finally, our method uses between 0.19 and 0.35 times the memory used by PHAT.

\begin{figure}
	\includegraphics[width=.49\linewidth]{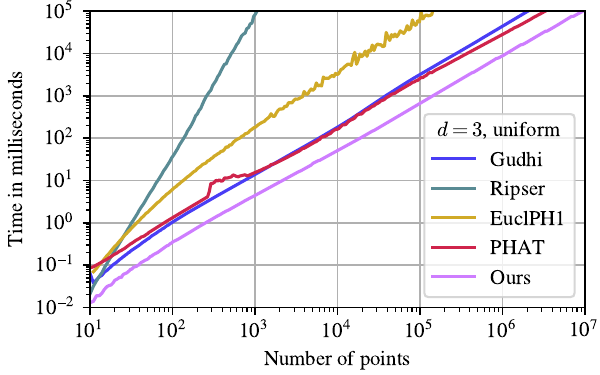}
	\includegraphics[width=.49\linewidth]{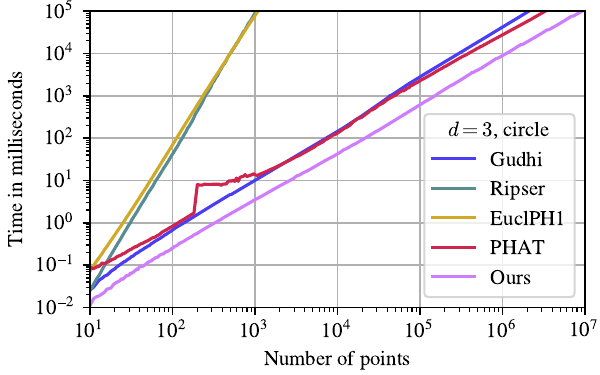}
	\caption{Running time for computing the Delaunay--Rips persistence diagram of
	a point cloud in $\bbr^3$, uniformly distributed in the unit cube (left) and next
	over the unit circle (right).
	We compare our approach described in \autoref{sec:method} with
	Gudhi~\cite{maria2014gudhi} and PHAT~\cite{bauer2017phat}. The running time for
	computing the Rips 0, 1 and 2-dimensional persistence diagrams with
	Ripser~\cite{bauer_ripser_2021}, and that for computing the Rips 0 and 1-dimensional
	persistence diagram with Euclidean PH1, the method described
	in~\cite{koyama2023faster}, are also shown for comparison.}
	\label{fig:timings-3}
\end{figure}

\begin{figure}
	\includegraphics[width=\linewidth]{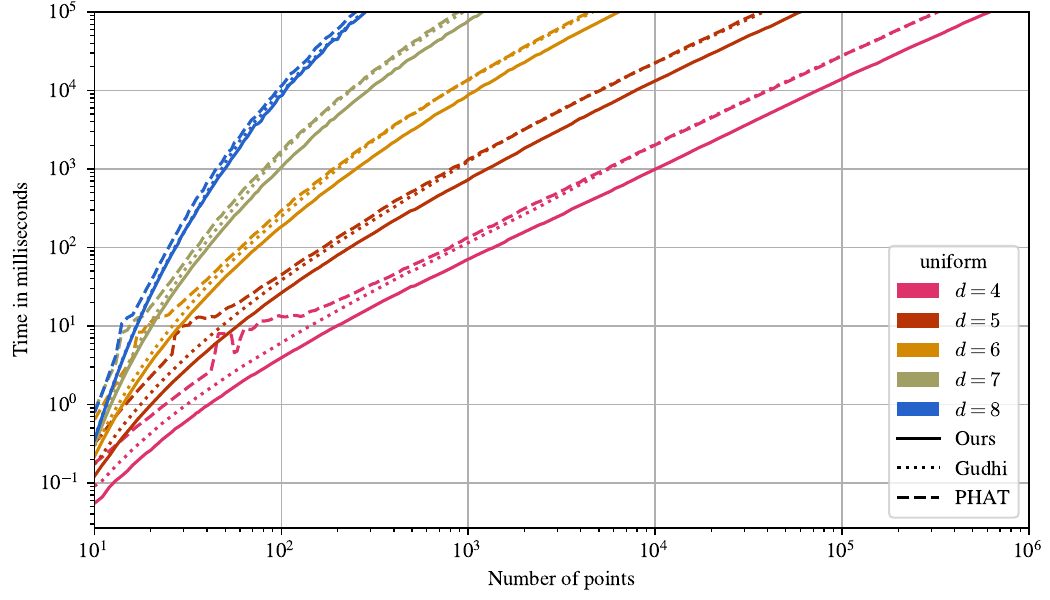}
	\caption{Running time for computing the Delaunay--Rips persistence diagram of a uniformly distributed point cloud in $\bbr^\ddim$, for varying $4\leq\ddim\leq8$. We compare our approach with Gudhi~\cite{maria2014gudhi} and PHAT~\cite{bauer2017phat}.
	}
	\label{fig:timings-d}
\end{figure}

\begin{table}
	\begin{tabular}{|l|c|c|c|c|c|c|c|}
		\hline
		Dimension $d$ & 3 & 4 & 5 & 6 & 7 & 8 & 9\\
		\hline
		Average overall timing ratio & 0.29 & 0.50 & 0.58 & 0.63 & 0.64 & 0.76 & 1.44 \\
		Average \PH{} timing ratio & 0.20 & 0.33 & 0.39 & 0.44 & 0.55 & 0.73 & 1.46 \\
		Average memory ratio & 0.35 & 0.26 & 0.19 & 0.19 & 0.20 & 0.19 & 0.24 \\
		\hline
	\end{tabular}
	\caption{Average running time and memory consumption ratios of our method compared to PHAT, by dimension. The \PH{} timing is obtained by removing the time required to compute the Delaunay complex.}
	\label{table:speedups}
\end{table}

\paragraph{Parallelism}
\autoref{fig:timings-parallel} gives the speedup of the computation of the persistence diagrams with our parallel implementation (see \autoref{subsec:parallelization}) according to the number of threads. Note that there is a single-thread bottleneck in the recovery of $\ddim$-simplices in the Delaunay complex (to which CGAL's implementation does not provide random-access). In our settings, this speedup reached around 3 for 8 threads and around 4 for 20 threads. In addition, it tends to increase with the ambient dimension.

\begin{figure}
	\begin{center}
		\includegraphics[width=.75\linewidth]{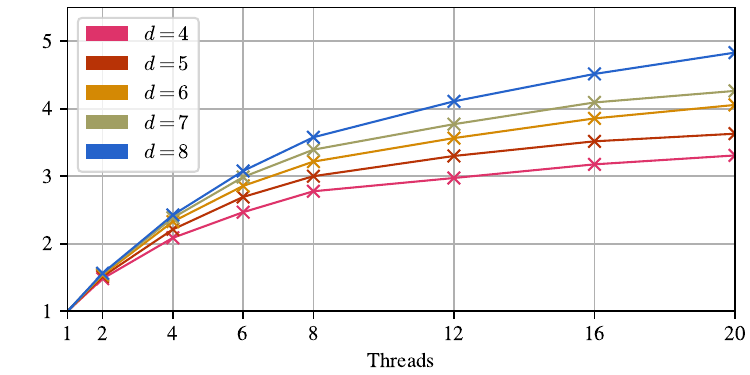}
	\end{center}
	\caption{Speedup of our method (excluding the Delaunay computation step) according to the number of threads. Computed for uniformly distributed point sets (200000 points for $d=4$, 20000 points for $d=5$, 2000 points for $d=6$, 600 points for $d=7$, 200 points for $d=8$).
	Note that the CPU used for this experiment had a heterogeneous architecture with 8 fast cores (5.3 GHz) and 12 slower cores (3.8 GHz).}
	\label{fig:timings-parallel}
\end{figure}

\paragraph{Number of simplices}

\begin{table}[]
\begin{tabular}{|r|r|r|r||r|r|r|r|r||r|r|}
\hline
$\ddim$ &
  $n$ &
  distrib. &
  $\homodim$ &
  $|\PH{\homodim-1}|$ &
  $|\poly{\homodim}|$ &
  $|\msa{\homodim}|$ &
  $|\usc{\homodim}|$ &
  $|\del^{\homodim}|$ &
  $\frac{|\poly{\homodim}|}{|\msa{\homodim}|}$ &
  $\frac{|\poly{\homodim}|}{|\del^{\homodim}|}$ \\
\hline
\multirow{6}{*}{3} & \multirow{6}{*}{\rotatebox{90}{10000}} & \multirow{3}{*}{uniform} & 1 & 9999 & 9999  & 9999  & 15609 & 76690  & 1.00 & 0.13 \\
                   &                        &                          & 2 & 5610 & 24783 & 66691 & 71315 & 133258 & 0.37 & 0.19 \\
                   &                        &                          & 3 & 2938 & 4624  &     / &     / & 66567  &    / & 0.07 \\
\cline{3-11}
                   &                        & \multirow{3}{*}{sphere}  & 1 & 9999 & 9999  & 9999  & 12827 & 64045  & 1.00 & 0.16 \\
                   &                        &                          & 2 & 2828 & 18570 & 54046 & 54379 & 103762 & 0.34 & 0.18 \\
                   &                        &                          & 3 & 174  & 333   &     / &     / & 49716  &    / & 0.01 \\
\hline
\multirow{8}{*}{4} & \multirow{8}{*}{\rotatebox{90}{1000}}  & \multirow{4}{*}{uniform} & 1 & 999  & 999   & 999   & 1713  & 16109  & 1.00 & 0.06 \\
                   &                        &                          & 2 & 714  & 5517  & 15110 & 16942 & 54026  & 0.37 & 0.10 \\
                   &                        &                          & 3 & 433  & 15218 & 38916 & 40455 & 64547  & 0.39 & 0.24 \\
                   &                        &                          & 4 & 297  & 1539  &     / &     / & 25631  &    / & 0.06 \\
\cline{3-11}
                   &                        & \multirow{4}{*}{sphere}  & 1 & 999  & 999   & 999   & 1539  & 11549  & 1.00 & 0.09 \\
                   &                        &                          & 2 & 540  & 3872  & 10550 & 11386 & 32686  & 0.37 & 0.12 \\
                   &                        &                          & 3 & 277  & 8305  & 22136 & 22592 & 34720  & 0.38 & 0.24 \\
                   &                        &                          & 4 & 26   & 456   &     / &     / & 12584  &    / & 0.04 \\
\hline
\multirow{10}{*}{5} & \multirow{10}{*}{\rotatebox{90}{100}}  & \multirow{5}{*}{uniform} & 1 & 99   & 99    & 99    & 161   & 1922   & 1.00 & 0.05 \\
                   &                        &                          & 2 & 62   & 617   & 1823  & 2942  & 9812   & 0.34 & 0.06 \\
                   &                        &                          & 3 & 26   & 3793  & 7989  & 8513  & 19935  & 0.47 & 0.19 \\
                   &                        &                          & 4 & 21   & 4477  & 11946 & 12229 & 17652  & 0.37 & 0.25 \\
                   &                        &                          & 5 & 10   & 283   &     / &     / & 5706   &    / & 0.05 \\
\cline{3-11}   
                   &                        & \multirow{5}{*}{sphere}  & 1 & 99   & 99    & 99    & 163   & 1561   & 1.00 & 0.06 \\
                   &                        &                          & 2 & 64   & 523   & 1462  & 2209  & 6837   & 0.36 & 0.08 \\
                   &                        &                          & 3 & 42   & 2688  & 5375  & 5747  & 12441  & 0.50 & 0.22 \\
                   &                        &                          & 4 & 37   & 2801  & 7066  & 7209  & 10121  & 0.40 & 0.28 \\
                   &                        &                          & 5 & 4    & 143   &    /  &    /  & 3055   &    / & 0.05 \\
\hline
\end{tabular}
\caption{Number of non-zero persistence pairs and size of the different 
constructions for various ambient dimensions and point cloud distributions. 
$|\PH{\homodim}|$ denotes the number of non-zero $\homodim$-dimensional 
persistence pairs. $|\poly{\homodim}|$ denotes the number of $\homodim$-cells 
delimited by $\usc{\homodim}\cup\nmj{\homodim}$ within $\msa{\homodim+1}$.}
	\label{table:simplices_number}
\end{table}

\autoref{table:simplices_number} gives the number of non-zero persistence pairs, 
the number of $\homodim$-cells, and the size of $\msa{\homodim}$, 
$\usc{\homodim}$ and $\del^\homodim$, with various ambient dimensions $\ddim$ 
and point cloud distributions (uniformly in $[-1,1]^\ddim$ or next to the 
sphere $\bbs^{\ddim-1}$).
Examining the quantities $1-|\poly{\homodim}|/|\msa{\homodim}|$ (where $|\poly{\homodim}|$ is the number of $\homodim$-cells 
delimited by $\usc{\homodim}\cup\nmj{\homodim}$ within $\msa{\homodim+1}$) gives an idea of the proportion of column reductions that are avoided by considering the $\homodim$-cells instead of individual $\msa{\homodim}$ simplices, which allows for a faster reduction step. 
In particular, one can observe that $|\poly{\homodim}|/|\msa{\homodim}|$ tend to increase with the ambient dimension $\ddim$, explaining the decreasing speedup noticed in \autoref{table:speedups}, until it becomes unprofitable to compute these cells compared to the benefits it provides.

\paragraph{Persistent generators}

We also implemented, for $\ddim=3$, the recovery of \PH{1} and \PH{2} persistent generators, which permits to have topological representations of each significant handle and cavity in the point cloud.
More precisely, the generator associated with a persistent pair $(\sigma,\tau)$ is the set of simplices in the column of the cell associated with $\tau$ after the reduction.
The 1-generators (resp. 2-generators) consist of $\ug\cup\nmj{1}$ edges (resp. $\usc{2}$ triangles) only.
While these generators are topologically correct, note that nothing encourages them to have a good geometric quality. \autoref{fig:generators} shows \PH{1} generators computed for points sampled on surface in $\bbr^3$ of genus 0, 1, and 3.

\begin{figure}
	\begin{center}
	\begin{tabular}{ccccc}
		& Surface & Input point cloud & 1-generators & 2-generators\\
		\hspace{-.2cm}\rotatebox{90}{\hspace{.3cm}genus = 0}&
		\hspace{-.3cm}\includegraphics[width=.235\linewidth]{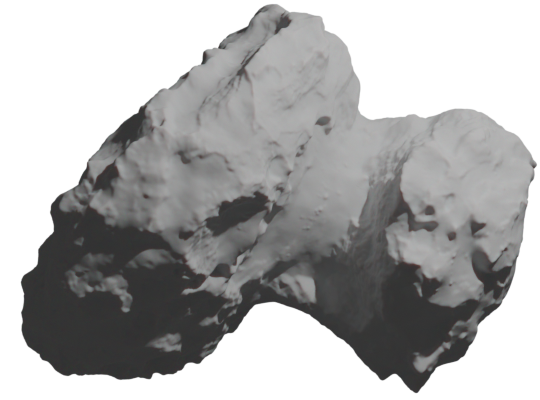}&
		\hspace{-.3cm}\includegraphics[width=.235\linewidth]{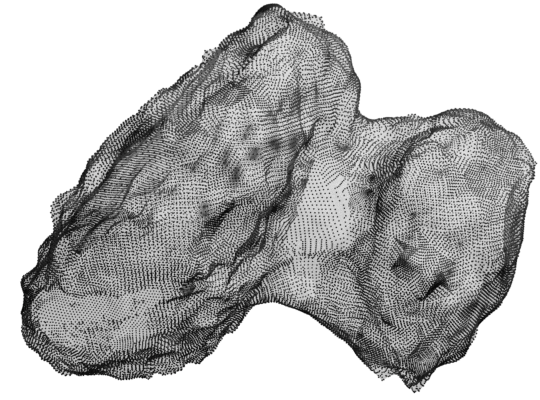}&
		\hspace{-.3cm}\includegraphics[width=.235\linewidth]{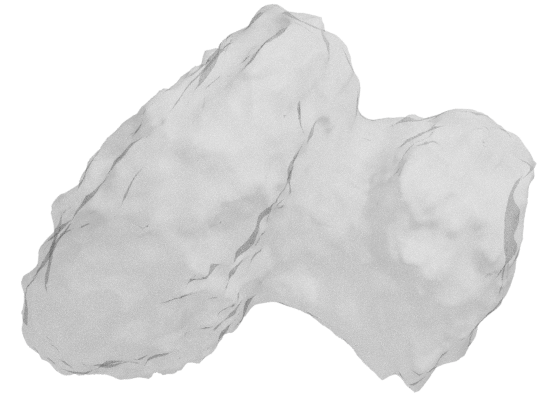}&
		\hspace{-.3cm}\includegraphics[width=.235\linewidth]{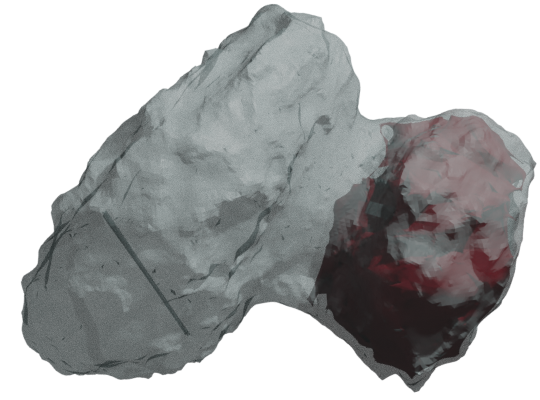}\\
		\hspace{-.2cm}\rotatebox{90}{\hspace{.6cm}genus = 1}&
		\hspace{-.3cm}\includegraphics[width=.235\linewidth]{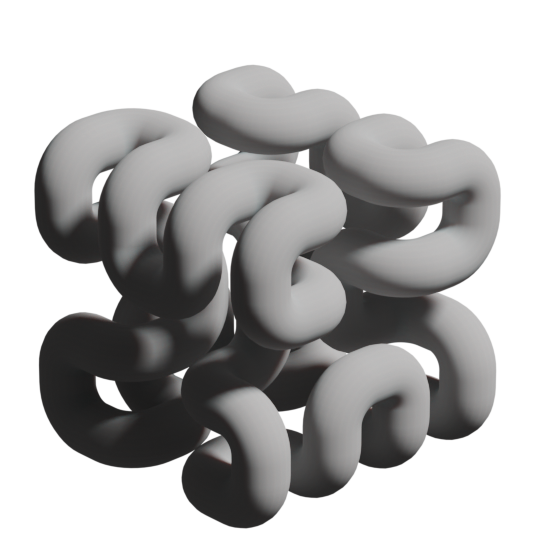}&
		\hspace{-.3cm}\includegraphics[width=.235\linewidth]{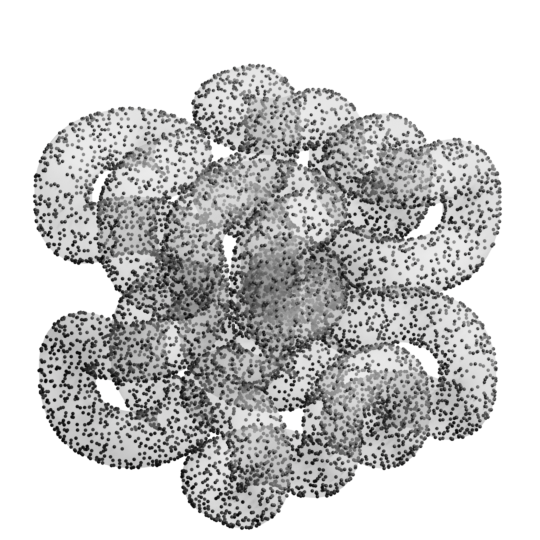}&
		\hspace{-.3cm}\includegraphics[width=.235\linewidth]{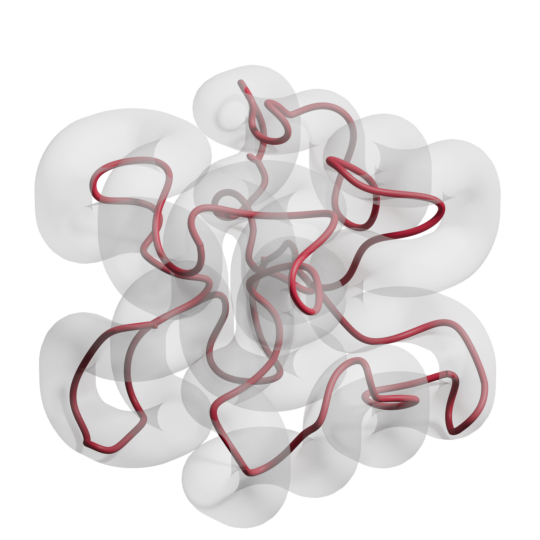}&
		\hspace{-.3cm}\includegraphics[width=.235\linewidth]{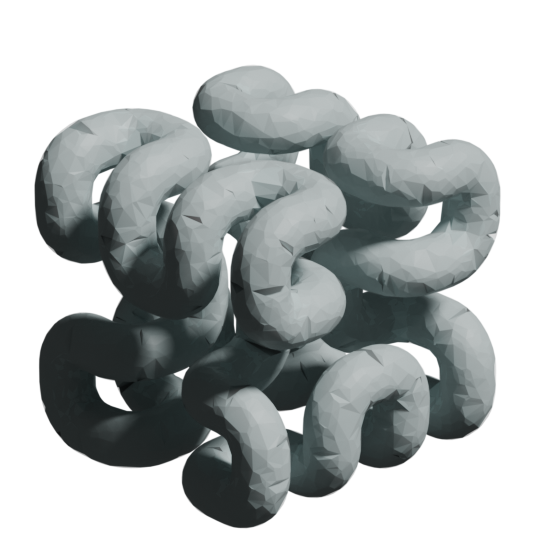}\\
		\hspace{-.2cm}\rotatebox{90}{\hspace{.2cm}genus = 3}&
		\hspace{-.3cm}\includegraphics[width=.235\linewidth]{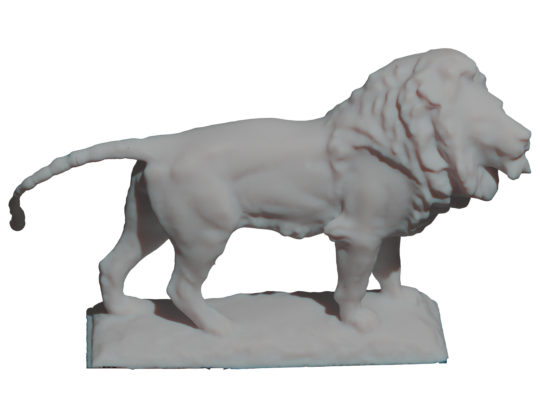}&
		\hspace{-.3cm}\includegraphics[width=.235\linewidth]{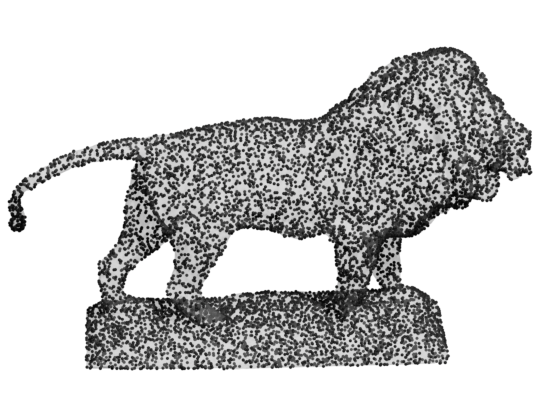}&
		\hspace{-.3cm}\includegraphics[width=.235\linewidth]{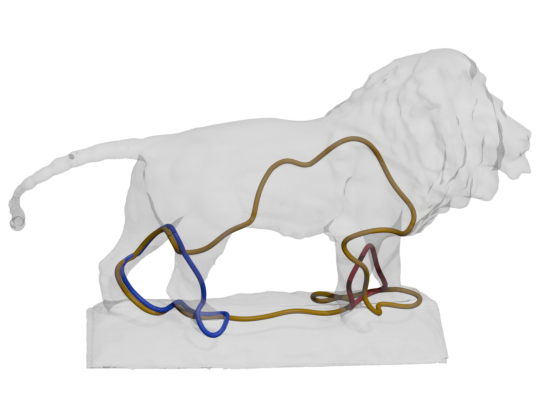}&
		\hspace{-.3cm}\includegraphics[width=.235\linewidth]{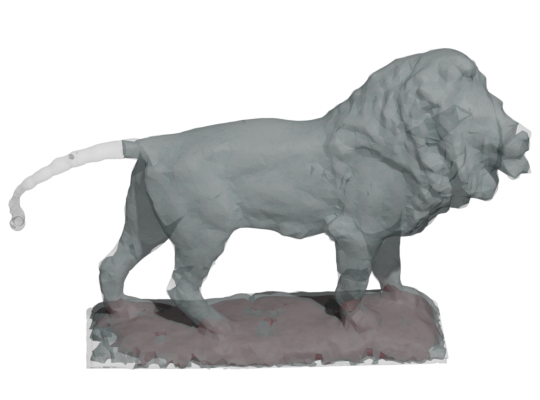}
	\end{tabular}
	\end{center}
	\caption{1- and 2-dimensional persistent generators for point clouds sampled on surfaces of genus 0 (comet, top line), genus 1 (Hilbert curve, middle line), and genus 3 (lion statue, bottom line).
	Only persistent generators with significant persistence are shown.
	Only the largest connected component of each 1-generator was kept and was smoothed for readability.
	1- and 2-generators can be interpreted respectively as the topological handles and topological cavities of the input point cloud.
	}
	\label{fig:generators}
\end{figure}

\section*{Conclusion}
\label{sec:conclusion}

In this work, we performed a theoretical and empirical study of the behavior of 
persistence diagrams for the Delaunay--Rips filtration of point clouds. In 
particular, we studied the stability of the associated persistence diagrams in 
terms of bottleneck distance, and how they approximate the more common Rips 
persistence diagrams.
Then, we proposed a fast and memory-efficient algorithm that computes these 
Delaunay--Rips persistence diagrams of low-dimensional point clouds, leveraging 
the specific structure of the filtration.
This algorithm, supported by our theoretical and experimental study, makes the 
Delaunay--Rips filtration a relevant candidate in practice for the fast 
computation of persistent homology of point clouds in low-dimension. In 
particular, it may serve as a way to devise topological losses for machine 
learning algorithms involving three-dimensional point clouds or low-dimensional 
latent spaces.
For future work, we will consider the generalization of certain of our results to other flag filtrations, enabling hopefully faster computations by discarding efficiently non relevant simplices or efficiently merging simplices into fewer, larger cells.

\section*{Acknowledgments}
This work is partially supported by the European Commission grant
ERC-2019-COG \emph{``TORI''} (ref. 863464, \url{https://erc-tori.github.io/}); and by ANR-23-PEIA-0004 (PDE-AI, \url{https://pde-ai.math.cnrs.fr/}).

\bibliography{biblio}

\end{document}